\newtheorem{thm}{Theorem}
\newtheorem{lem}{Lemma}
\newtheorem{coro}{Corollary}%[section]
\newtheorem{assume}{Assumption}
\newcommand*{\Scale}[2][4]{\scalebox{#1}{$#2$}}%
\def\BibTeX{{\rm B\kern-.05em{\sc i\kern-.025em b}\kern-.08em
		T\kern-.1667em\lower.7ex\hbox{E}\kern-.125emX}}
\begin{document}

%\title{Service Discovery Assisted Dynamic Orchestration in NDN-based Computing Networks}
\title{Optimal Dynamic Orchestration in NDN-based Computing Networks}

\author{
\IEEEauthorblockN{
Hao Feng, Yi Zhang, Srikathyayani Srikanteswara, Marcin Spoczynski, Gabriel Arrobo,\\ Jing Zhu, Nageen Himayat
}
\IEEEauthorblockA{
Intel Labs, USA, Email: \{hao.feng, yi1.zhang, srikathyayani.srikanteswara, marcin.spoczynski, gabriel.arrobo,\\ jing.z.zhu, nageen.himayat\}@intel.com
}

}

%\thanks{}

\maketitle

\begin{abstract}
Named Data Networking (NDN) offers promising advantages in deploying next-generation service applications over distributed computing networks. We consider the problem of dynamic orchestration over a NDN-based computing network, in which nodes can be equipped with communication, computation, and data producing resources. Given a set of services with function-chaining structures, we address the design of distributed online algorithm that controls each node to make adaptive decisions on flowing service requests, committing function implementations, and/or producing data. We design a Service Discovery Assisted Dynamic Orchestration (SDADO) algorithm that reduces the end-to-end (E2E) delay of delivering the services, while providing optimal throughput performance. The proposed algorithm hybrids queuing-based  flexibility and topology-based discipline, where the topological information is not pre-available but obtained through our proposed service discovery mechanism. We provide throughput-optimality analysis for SDADO, and then provide numerical results that confirm our analysis and demonstrates reduced round-trip E2E delay. 
\end{abstract}
\begin{IEEEkeywords}
	dynamic orchestration, distributed computing networks, Named Data Networking, function chain.
\end{IEEEkeywords}

\pagenumbering{arabic}

\IEEEpeerreviewmaketitle

%\footnote{This work was in part funded by the spanish ministry of education FPU grant AP2010-1911.}

\section{Introduction}
\label{sec_introduction}
Next-generation network service applications with high throughput and low latency requirements, 
such as online gaming, real-time remote sensing, augmented reality,
% and the rapidly growing Metaverse applications \cite{bib_metaverse}, 
etc., are explosively dominating the internet traffic. 
%Unlike the traditional information services, where the consumed information is produced/cached at given sources and delivered over traditional communication networks, next-generation services allow in-network real-time processing of source information via multiple functions hosted at different locations. On the other hand, the significant growth on number and performance makes edge devices' capability of tackling computing tasks promising.
On the other hand, with the significant growth in both aspects of quantity and performance, edge devices provide promising capability of tackling computing tasks, especially if we extend the edge resources to include far edge devices, e.g., cameras, IoT devices, mobile devices, and vehicles. 
Therefore, elastically dispersing computing workload to edge gradually lead the solutions to deploying the next-generation network services. However, with arbitrary network scale and topology, three questions are naturally raised: i) where to execute the network functions and produce source data; ii) how to steer the data flowing toward appropriate compute and data resources; iii) how to dynamically adapt the decisions to changing service demands. To solve these issues, centralized solutions have been extensively explored \cite{bib_heft}\cite{bib_Yang_et_al}, but have been challenged due to their scalability and cost \cite{bib_Sarkar_et_al}\cite{bib_Bonomi_et_al}, which in turn makes distributed solutions appealing. 

In distributed and dynamic in-network-computing paradigm, it has been found that Named Data Networking (NDN)-based computing orchestration framework offers advantages over traditional edge computing orchestration \cite{bib_globecom_Kathyayani}, where NDN obviates having a central registry that tracks all resources. In context of NDN, resources can be reached using \emph{name}-based queries regardless their locations. When network topology is changing, each device can send a new query with the name remaining constant even if the device's address changes. With these benefits, NDN overlays over IP are being explored as a promising solution for computing networks \cite{bib_ndn_over_ip}. 

Existing popular NDN routing schemes, e.g. Best Route, Multi-cast, focus on helping nodes to make forwarding decisions, and the target network is only for data delivery \cite{bib_nfd_guider}. Although modifications can be done to extend these algorithms to computing networks, they still lack enough adaptability to time-varying traffics. On the other hand, dynamic network control policies have been studied in \cite{bib_DCNC}, where the proposed backpressure-based algorithm DCNC has shown promising adaptability to dynamic environment but lacks routing discipline that results in compromised delay performance. An enhanced algorithm EDCNC, which biases the backpressure routing using topological information, is further proposed in Ref. \cite{bib_DCNC} and shows promising end-to-end (E2E) latency. However, EDCNC requires the whole network to apply a global bias parameter, whose value significantly influences the delay performance but has no clear clue to choose.   
%which lacks consistency with the distributed and dynamic paradigm. In addition, given a computing network with arbitrary topology, there is no clear clue of choosing a proper global bias coefficient value, by which the delay performance is heavily affected. 
The gap between NDN and backpressure-based dynamic orchestration has been firstly addressed in \cite{bib_DECO} on orchestrating single-function network services, where the proposed scheme DECO allows each node to make adaptive decisions on whether or not to accept NDN computing requests. Nevertheless, routing is assumed to be predetermined in \cite{bib_DECO}, which may not be the case in many networking scenarios.

In this paper, we propose a Service Discovery Assisted Dynamic Orchestration (SDADO) algorithm for deploying \emph{function-chaining} services in NDN-based distributed computing networks with the following contributions: i) we develop a SDADO orchestration framework for committing NDN-based chaining requests for compute/data; ii) we propose a name-based \emph{service discovery} mechanism to search for topological information processed with targeted function-chaining structures; iii) we propose a novel SDADO orchestration algorithm that achieves reduced E2E latency and throughput-optimality, and provide theoretical analysis and numerical evaluations. 
%Each node under SDADO dynamically orchestrates processing-commitment, forwarding and data producing on NDN queries, while incorporating topological information obtained via \emph{service discovery} into routing. This algorithm achieves reduced E2E latency while maintaining throughput optimality. Our contributions are summarized as follows.
%\begin{itemize}
%	\item  We develop a dynamic orchestration framework for orchestrating NDN computing/data requests 
%	We develop a dynamic orchestration framework for function-chaining services over NDN-based distributed computing networks, where name-based NDN computing/data requests are systematically orchestrated according to function-chaining order. 
%	\item We propose a \emph{service discovery} mechanism to search for topological information ``baked" with targeted function-chaining structures. Unlike EDCNC or DECO, the proposed SDADO scheme does not assume topological information to be pre-available but enables each node to initiate the proposed service discovery and obtain the minimum \emph{abstract distances} going through any targeted function chain segment. 
%	\item We propose a novel SDADO routing scheme that achieves low E2E latency while maintaining throughput optimality by adaptively hybriding routing flexibility and discipline. Different from EDCNC, the proposed SDADO-routing dynamically makes efficient routing choices but avoids using any global network-dependent parameter. 	
%\end{itemize}  

The paper is as organized follows. Section \ref{sec_system_model} presents the system model. Section \ref{sec_service_discovery} describes the service discovery mechanism. The SDADO algorithm's description, analysis, and numerical evaluation are respectively presented in Section \ref{sec_sdado}, \ref{sec_analysis}, and \ref{sec_simulation}. The paper is concluded in Section \ref{sec_conclusion}.

\section{System Model}
\label{sec_system_model}
\subsection{Network and Service Model}
\label{subsec_network_service_model}
We consider a time-slotted network system with slots normalized to integer units $t\!\in\!\{0,1,2\cdots\}$. A computing network $\mathcal G=(\mathcal V,\mathcal E)$ with arbitrary topology consists of $\left|\mathcal V\right|$ nodes inter-connected by $\left|\mathcal E\right|$ communication links. Each node $i$ in the network represents a network unit, e.g., user equipment, access point, edge server, or data center, etc., and let $(i,j)$ represent the link from node $i$ to $j$. Denote $\mathcal O(i)$ as the set of neighbor nodes having links incident to node $i$. Let $C_{ij}(t)$ represent the maximum transmission rate over link $(i,j)$ at time $t$, whose value evolves ergodically, and its statistical average $\bar C_{ij}$ can be estimated by averaging $C_{ij}(t)$ over time.
%Let $C_{ij}(\bm S(t))$ represent the maximum transmission rate over link $(i,j)$ at time $t$, whose value is determined by the \emph{network topological state process} $\bm S(t)$. Here $\bm S(t)$ is the collective process of link conditions subject to the nodes' transmission capabilities and links' channel conditions, and we model $\bm S(t)$ as a Markov process evolving within state space $\mathcal S$ whose steady state probabilities exist. Due to ergodicity, the statistical average of $C_{ij}(\bm S(t))$, denoted as $\bar C_{ij}$, can be estimated by averaging $C_{ij}(\bm S(t))$ over time. 

A network service $\phi$ is described by a chain of $K_\phi$ functions plus the data producing. We use the pair $(\phi,k)$ with $k\in \{0,\cdots ,K_\phi \} $ to denote the $k$-th function of service $\phi$, where we index the data producing of service $\phi$ by $(\phi, 0)$ and call it \emph{function $(\phi, 0)$}. We denote by $r^{(\phi,k)}$ the \emph{processing complexity} of implementing function $(\phi,k)$, where $1\le k \le K_\phi$. That is, when a data packet is processed by implementing function $(\phi,k)$, it consumes $r^{(\phi,k)}$ units of computing cycles. Denote $\mathcal C_\phi$ as the set of service $\phi$'s consumers; denote ${\mathcal P}_{(\phi,k)} $ as the set of nodes that are equipped with compute or data resources to execute function $(\phi,k)$, $0\le k \le K_\phi$; denote by $C_{i}^{\text{pr}}$ the processing capacity (with unit of, e.g., cycles/slot) at node $i\in \mathcal P_{(\phi,k)}$, $1\le k \le K_\phi$, and by $C_i^{\text{dp}}$ as the data producing capacity at node $i\in \mathcal P_{(\phi,0)}$. %\footnote{Data-producing is typically fast enough in practice and is rarely a speed bottleneck of deploying network services. Nevertheless, we still model its speed using finite value $C_i^{\text{dp}}$ for technical rigorousness}.
%We further denote by $C_{i}^{(\phi,k)}$ the processing/data-producing capacity at a node $i\in \mathcal P_{(\phi,k)}$, where $0\le k \le K_\phi$ \footnote{$C_{i}^{(\phi,0)}$ is the data-producing capacity at node $i$ measured by data-unit/slot, e.g., bits/slot; while $C_{i}^{(\phi,k)}$, where $1\le k \le K_\phi$, is the processing capacity at node $i$ measured by processing-cycles/slot.}.

In context of NDN, the deployment of a streaming service $\phi$ is a process of delivering interest packet streams carrying requests from a consumer to one or multiple data producers, followed by a process of delivering back the data packet streams. Each data packet is flowing through the reverse path traveled by its corresponding interest packet. In order to guarantee that data is processed in the function-chaining order, interest packets carrying the requests flow through resource-equipped nodes that \emph{commit} implementing the service's functions in the reversed chaining order, i.e., $(\phi,K_\phi)\rightarrow \cdots \rightarrow (\phi,0)$. 
%An example is shown in Fig. \ref{fig_ndn_example}. 

Streams of interest/data packets delivered for a service can be modeled into chaining stages as is shown in Fig. \ref{fig_function_chain}. We define each stage of interest packets serving consumer $c\in \mathcal C_{\phi}$ requesting for a function $(\phi, k)$ by an \emph{interest commodity} and use $(\phi,k,c)$ to index it, and with the same index, define each stage of data packets output from function $(\phi, k)$ serving consumer $c$ by a \emph{data commodity}. In NDN paradigm, the commodity index of an interest/data packet can be inferred from its name, and a \emph{processing commitment} can updates an interest packet's commodity by updating it's name. 
We let $z^{(\phi,k)}$, $0\le k \le K_\phi$, represent the size of a data commodity $(\phi,k,c)$ packet for all $c\in \mathcal C_\phi$, and assume that the network traffic is dominated by the data packets because the size of an interest packet is negligible respective to $z^{(\phi,k)}$.
%\begin{figure}
%	\centering
%	\includegraphics[width=2.5in]{}
%	\caption{An example of function-chaining service deployment in NDN-based computing network. Dashed/solid lines represent interest/data flow.}
%	\label{fig_ndn_example}
%	\vspace{-0.4cm}
%\end{figure}
\begin{figure}
	\centering
	\includegraphics[width=3.5in]{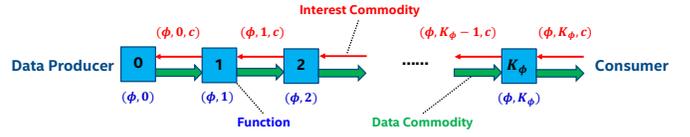}
	\caption{Multi-commodity chain flow model under NDN.}
	\label{fig_function_chain}
	\vspace{-0.1cm}
\end{figure}
\begin{figure}
	\centering
	\includegraphics[width=3in]{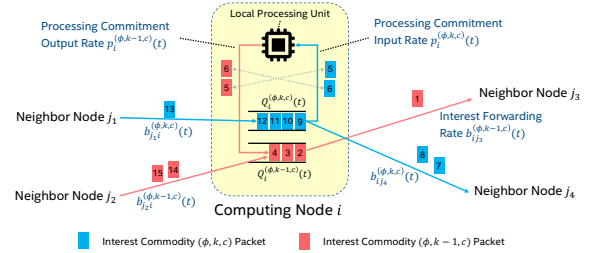}
	\caption{Interest packet queues' update at a computing node.}
	\label{fig_queuing_model}
	\vspace{-0.5cm}
\end{figure}

\subsection{Queuing Model}
\label{subsec_queuing_model}
We define by $a_c^{\phi}(t)$ the number of requests for service $\phi$ generated by consumer $c$ at time $t$ and by $\lambda_c^{\phi}$ its expected value. We assume that $a_c^{\phi}(t)$ is independently and identically distributed (i.i.d) across timeslots, and $\sum\nolimits_{\phi}a_i^\phi(t) \le A_{\max}$. At each time $t$, every node buffers the received interest and data packets into queues respectively according to their commodities. Each interest/data queue builds up from the transmission of interest/data packets from neighbors, request/data generations, and/or local processing-commitment/processing. Each node makes local orchestration decisions by controlling the queuing of interest packets, while the data packets are forwarded backward and get processed as committed according to the Pending Interest Table (PIT), and their queues are simply updated in First-In-First-Out mode. Fig. \ref{fig_queuing_model} shows an example that a computing node equipped with function $(\phi,k)$ makes decisions on whether or not to commit processing for the arrived interest commodity $(\phi,k,c)$ packets. Once a commitment is made, an interest packet is transported from commodity $(\phi,k,c)$ queue to $(\phi,k-1,c)$ queue with the its commodity index updated.

We define $Q_i^{(\phi,k,c)}(t)$ as the number of interest commodity $(\phi,k,c)$ packets queued in node $i$ at the beginning of timeslot $t$; define $b_{ij}^{(\phi,k,c)}(t)$ as the assigned number of interest commodity  $(\phi,k,c)$ packets to forward over link $(i,j)$ in timeslot $t$; define $p_{i}^{(\phi,k,c)}(t)$ as the assigned number of interest commodity $(\phi,k,c)$ packets to commit in timeslot $t$ for function $(\phi,k)$. Then node $i$ has the following queuing dynamics:
\begin{align}
&\Scale[0.95]{Q_i^{(\phi,k,c)}(t+1) \!\!\le \!\!
 \left[\!Q_i^{(\phi,k,c)}(t) -\!\! \sum_{j\in \mathcal O(i)}{b_{ij}^{(\phi,k,c)}}(t) - p_{i}^{(\phi,k,c)}(t) \!\right]^{\!+}}\notag\\
 &\Scale[0.95]{+\!\!\sum_{j\in \mathcal O(i)}{b_{ji}^{(\phi,k,c)}}(t) + p_{i}^{(\phi,k+1,c)}(t) + a_i^{\phi}(t)\mathbbm{1}\!\left(k=K_\phi\right),}
 \label{eq_queue_dynamic}
\end{align}   
where $[*]^+$ represents $\max\{*,0\}$; $a_i^{\phi}(t)\!\!=\!\!0$ if $i\!\notin\! \mathcal C_\phi$; $p_{i}^{(\phi,k,c)}(t)\!\!=\!\!0$ if $k\!\!>\!\!K_\phi$ or $i\!\!\notin\!\! \mathcal P_{(\phi,k)}$; 
%$\mathbbm{1}(*)$ takes value $1$ if $*$ is satisfied and $0$ otherwise; %The inequality in \eqref{eq_queue_dynamic} is because the assigned rates may be larger than the actual rates. 
$\mathbbm{1}(*)$ is the indicator function of $*$;
$p_{i}^{(\phi,k+1,c)}(t)$ is the number of interest commodity $(\phi,k+1,c)$ packets assigned to be transported into commodity $(\phi,k,c)$ queue after being committed by function $(\phi,k+1)$ in timeslot $t$.
%assigned to be committed by function $(\phi,k+1)$ and transported into interest commodity $(\phi,k,c)$ queue in timeslot $t$.

%the actual number of interest packets transmitted/committed is the minimum between the locally available packets and the assigned flow rate. The term $p_{i}^{(\phi,k+1,c)}(t)$ in \eqref{eq_queue_dynamic} is the number of interest commodity $(\phi,k+1,c)$ packets that are assigned to be locally committed by function $(\phi,k+1)$, have their commodity index updated to $(\phi,k,c)$, and are augmented to interest commodity $(\phi,k,c)$ queue, in timeslot $t$.

\subsection{System Constraints}
\label{subsec_system_constraints}
%The goal is to dynamically control the action vector $[\{b_{ij}^{(\phi,k,c)}(t)\}, \{p_i^{(\phi,k,c)}(t)\}]$ to support any average service request rate vector ${\bm \lambda}\triangleq\{\lambda_c^\phi\}$ within the \emph{computing network capacity region} \cite{bib_DCNC},
%% which defines the stability region, 
%such that the network is stabilized and the E2E latency is reduced.
%%which is defined as the closure of all the request rates that can be stabilized \cite{bib_DCNC}, such that the interest-data round trip latency can be reduced. 
%The system constraints are summarized in Eq. \eqref{eq_system_constraints}.
%\begin{subequations}\label{nsdp}
%	\begin{align}
%		&\text{min} &&
%		\sum_{(u,v)\in\mathcal E^a} w_{uv} \, y_{uv}    \label{obj}\\
%		&\text{s.t.} && \sum_{v\in\delta^{\!-\!}(u)} \!\! f_{vu}^{(d,\phi,i)} = \!\! \sum_{v\in\delta^{\!+\!}(u)} \!\! f_{uv}^{(d,\phi,i)} \qquad\quad\,\, \forall u, d, \phi, i \label{flowcons}\\
%		&&& f_{p(u),u}^{(d,\phi,i)} = f_{u,p(u)}^{(d,\phi,i-1)} \qquad\qquad\quad\quad \forall u, d, \phi, i \! \neq \! 0 \label{chain}\\
%		&&& \sum_{(d,\phi,i)} \! f_{uv}^{(d,\phi,i)} r^{(\phi,i+1)}_{uv} \! \leq y_{uv} \! \leq  c_{uv} \qquad\quad \forall (u,v) \label{cap}\\
%		&&& f_{s(u),u}^{(d,\phi,0)}=\lambda^{(d,\phi)}_{u}  \qquad\qquad\qquad\qquad\quad\,\,\, \forall u, d, \phi \label{source}\\
%		&&& f_{u,q(u)}^{(d,\phi,M_{\phi})}=0 \qquad\qquad\qquad\qquad\quad\, d, \phi, u\neq d \label{dest}\\
%		&&& f_{uv}^{(d,\phi,i)} \geq 0, \, y_{uv}\in\mathbb Z^+ \qquad\qquad\,\, \forall (u,v), d, \phi, i  \label{vars}
%	\end{align}
%\end{subequations}
The orchestration dynamically controls the action vector $[\{b_{ij}^{(\phi,k,c)}(t)\}, \{p_i^{(\phi,k,c)}(t)\}]$ subject to the constraints summarized in \eqref{eq_system_constraints}:
\begin{subequations}\label{eq_system_constraints}
	\begin{align}
		& \lim_{t\rightarrow \infty}\nicefrac{Q_i^{(\phi,k,c)}(t)}{t}=0\ \text{with prob. 1},\ \forall i,\phi,k,c,\label{eq_stability1}\\
		& \Scale[0.95]{\sum\nolimits_{(\phi,k,c)}{z^{(\phi,k)}b_{ij}^{(\phi,k,c)}(t)} \le \bar C_{ji}, \ \ \ \forall (i,j)\in \mathcal E,t,}\label{eq_comm_capacity}\\
		& \Scale[0.95]{\sum\nolimits_{(\phi,k>0,c)}{r^{(\phi,k)}p_i^{(\phi,k,c)}(t)} \le C_i^{\text{pr}}, \ \ \forall i\in \mathcal P_{(\phi,k)},t,}\label{eq_compute_capacity}\\
		& \Scale[0.95]{\sum\nolimits_{(\phi,0,c)}{z^{(\phi,0)}p_i^{(\phi,0,c)}(t)} \le C_i^{\text{dp}}, \ \ \forall i\in \mathcal P_{(\phi,0)}, t,}\label{eq_produce_capacity}\\
		& \Scale[0.95]{b_{ij}^{(\phi,k,c)}\!(t)\!\in\! \mathbb{Z}^+, p_i^{(\phi,k,c)}\!(t)\!\in\! \mathbb{Z}^+,\ \forall \phi,k,c,i,(i,j)\!\in \!\mathcal E, t.}\label{eq_value_domain}
	\end{align}
\end{subequations}
where \eqref{eq_stability1} characterizes the \emph{network rate stability} \cite{bib_Neely_book}; \eqref{eq_comm_capacity}-\eqref{eq_produce_capacity} present the capacity constraints for interest forwarding, processing commitment, and data producing.
%\eqref{eq_value_domain} shows that the scheduling variables take non-negative integer values.
Note that each interest packet flowing over a link eventually results in a data packet flowing backward over the reverse link, and considering that data flow dominates the traffic (see Section \ref{subsec_network_service_model}), we have \eqref{eq_comm_capacity} showing that interest packets' flowing over link $(i,j)$ reflects data packets' backward flowing constrained by the expected capacity of link $(j,i)$.

\begin{figure}
	\centering
	\hspace{-0.3cm}
	\subfigure[]{
		\includegraphics[width=4cm]{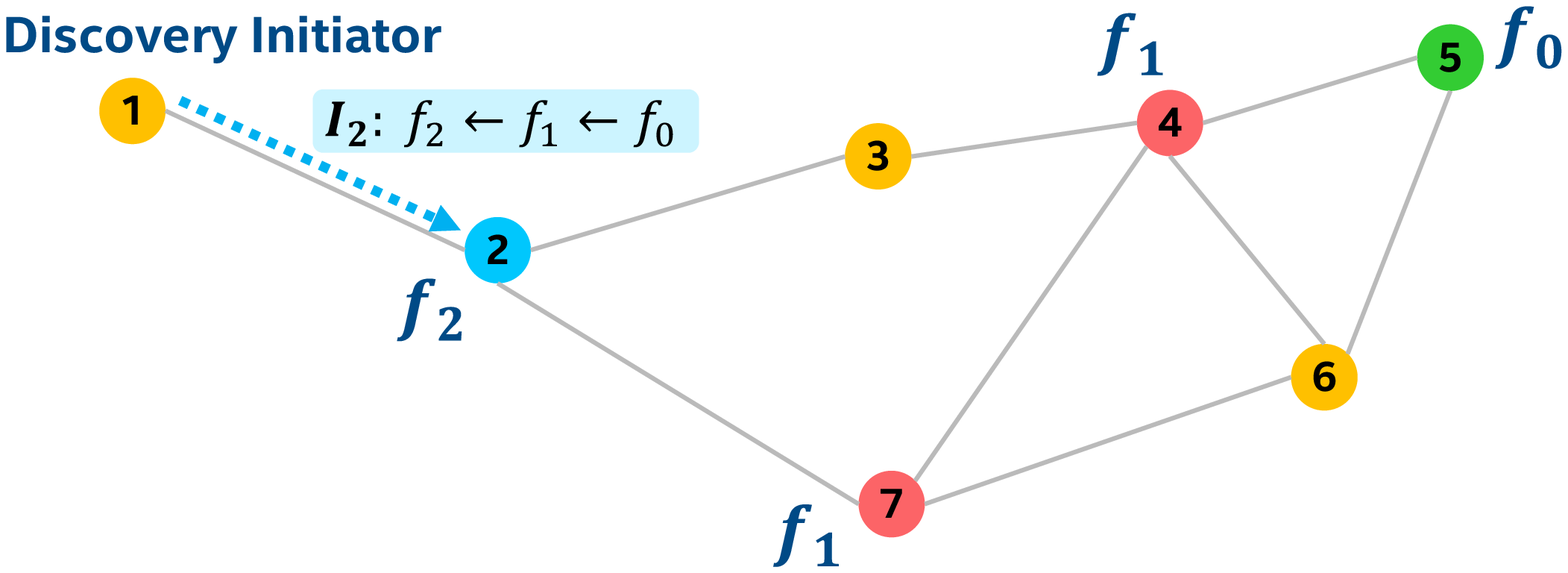}
		\label{fig_discovery1}
	}
	\hspace{-0.3cm}
	\subfigure[]{
		\includegraphics[width=4cm]{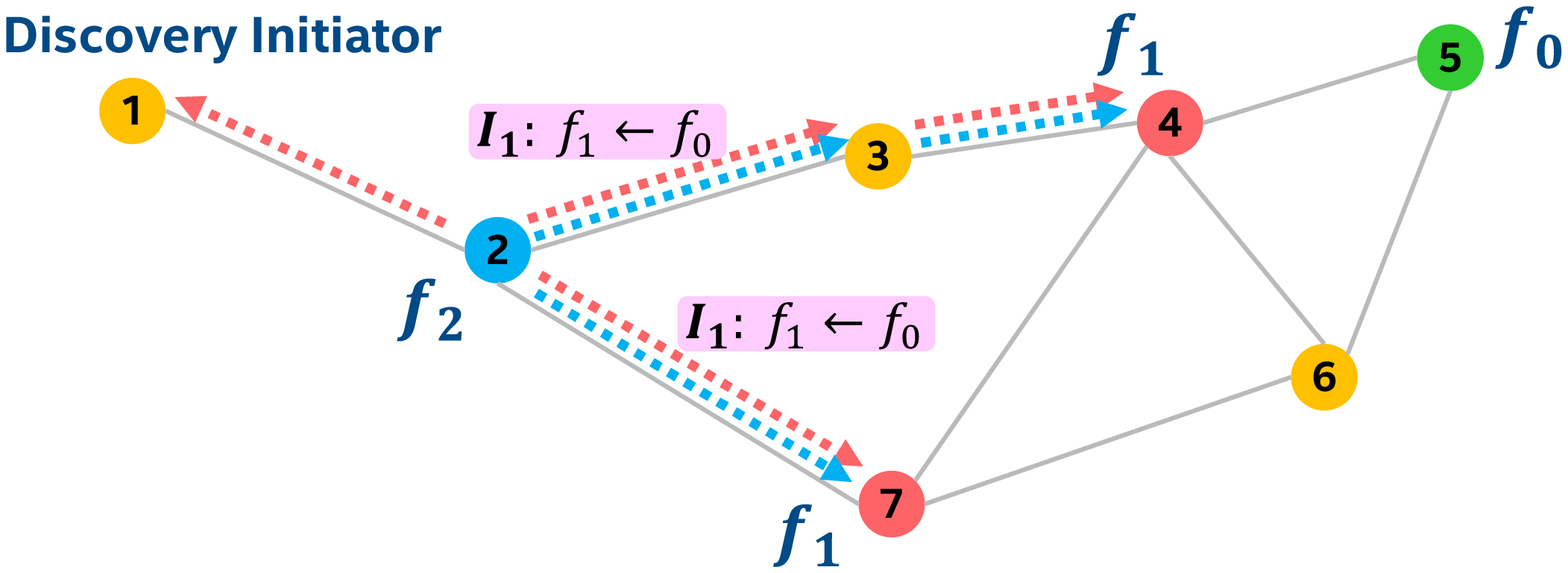}
		\label{fig_discovery2}
	}
	\hspace{-0.3cm}
	\subfigure[]{
		\includegraphics[width=4cm]{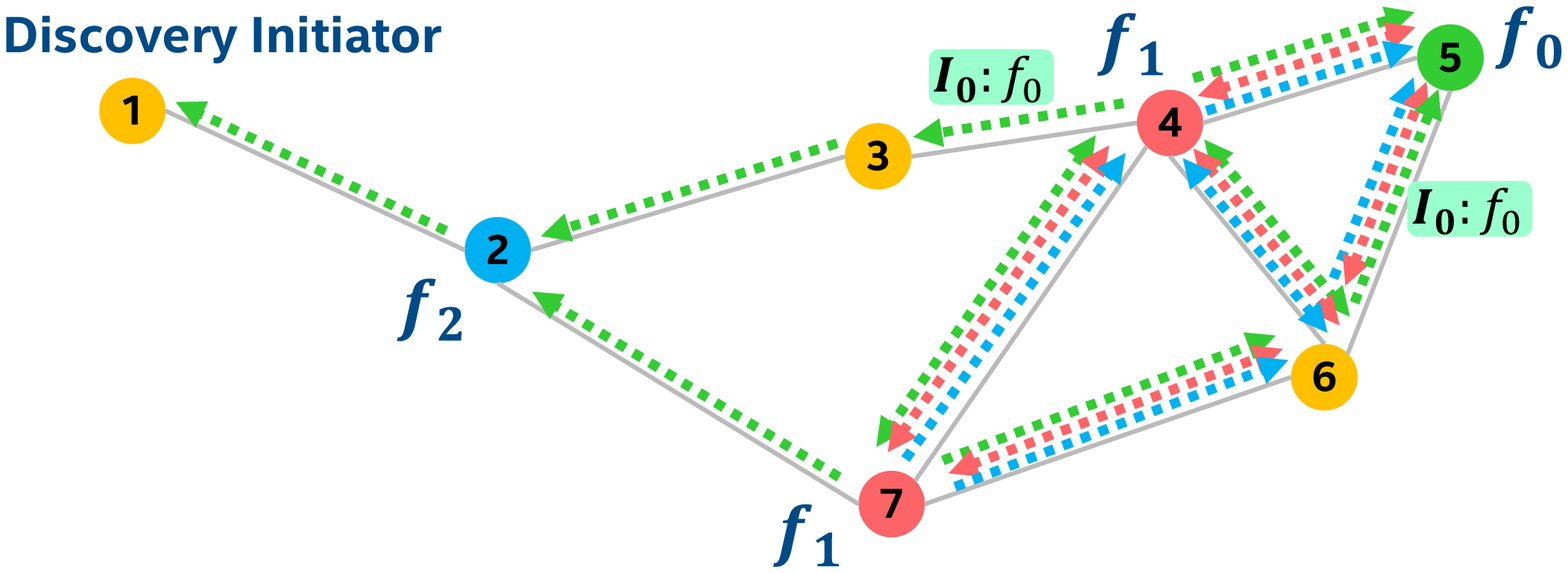}
		\label{fig_discovery3}
	}
	\hspace{-0.3cm}
	\subfigure[]{
		\includegraphics[width=4cm]{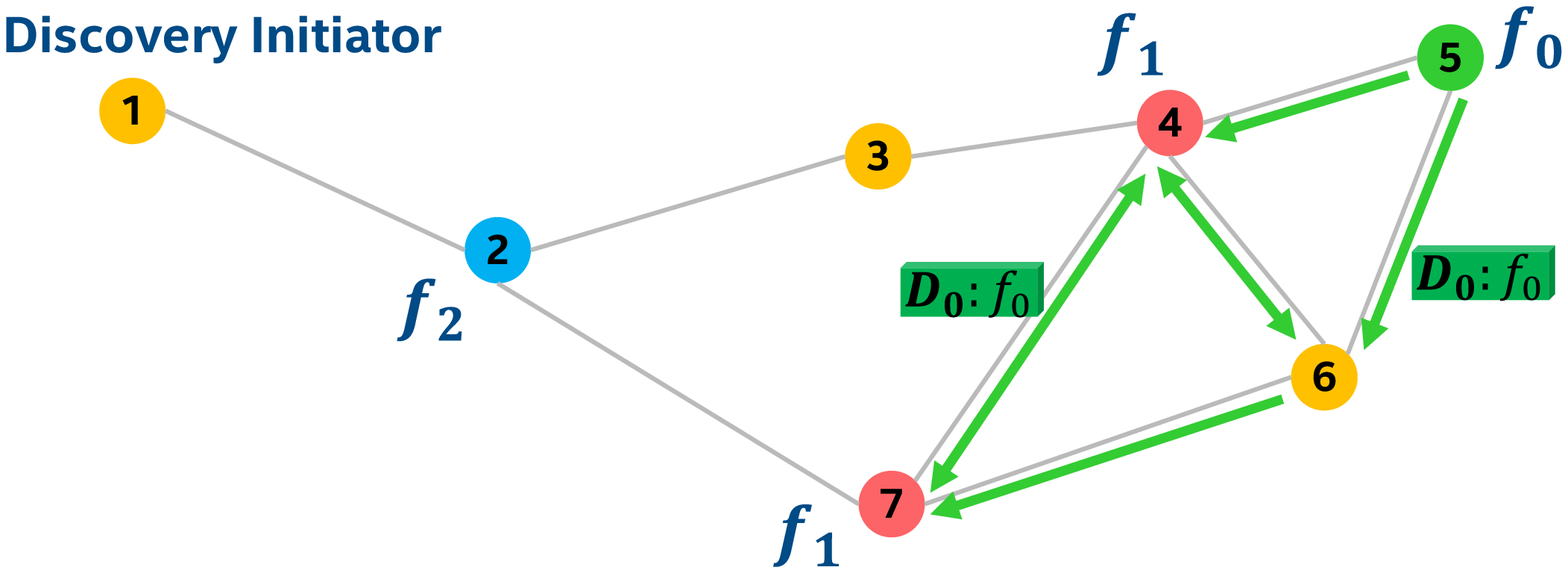}
		\label{fig_discovery4}
	}
	\hspace{-0.3cm}
	\subfigure[]{
		\includegraphics[width=4cm]{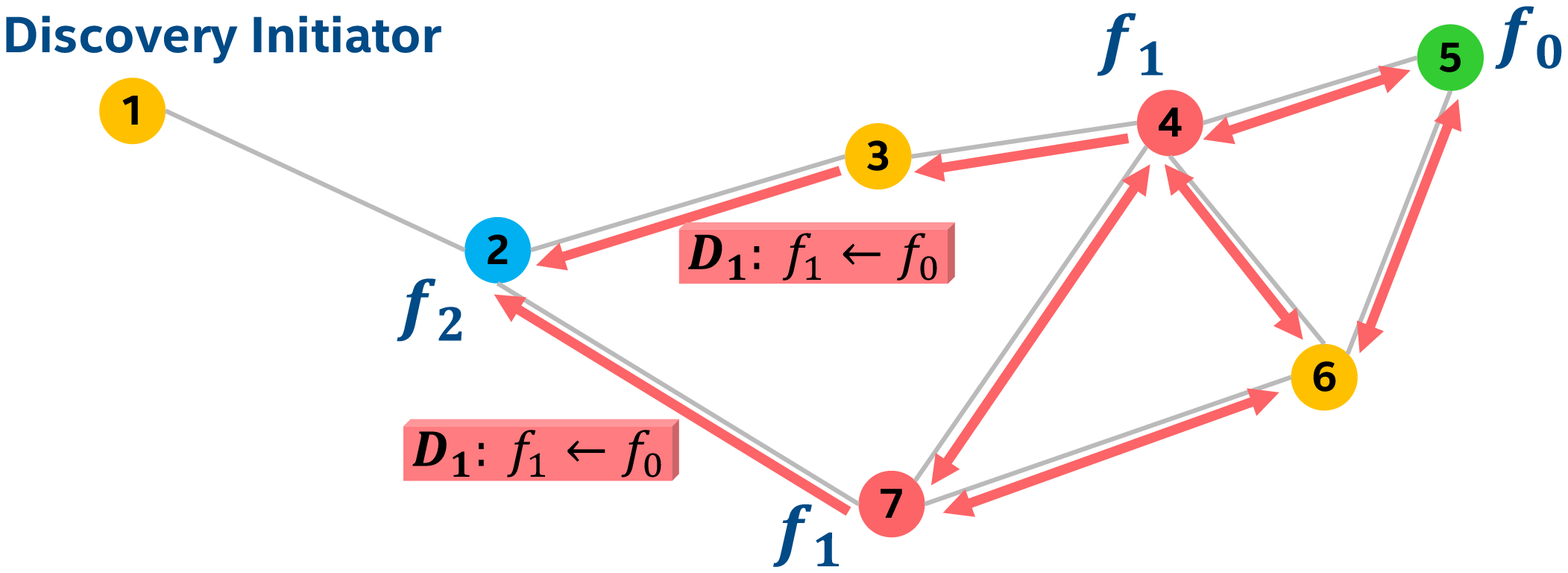}
		\label{fig_discovery5}
	}
	\hspace{-0.3cm}
	\subfigure[]{
		\includegraphics[width=4cm]{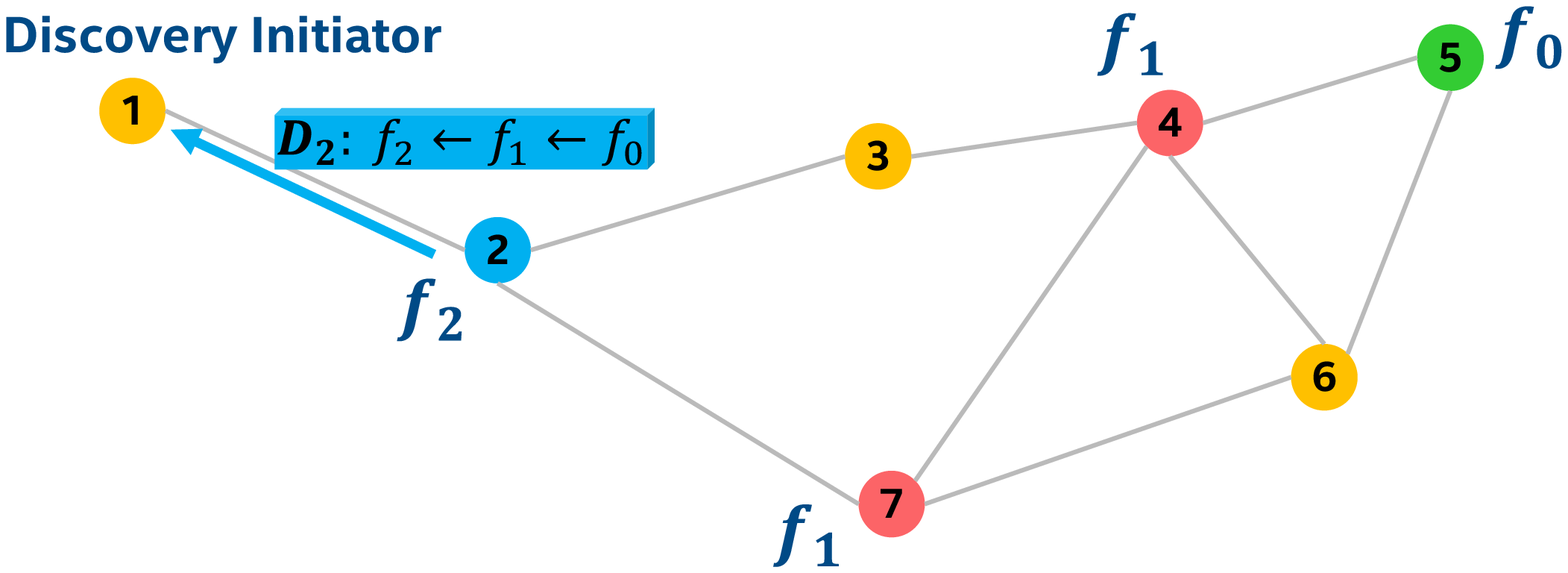}
		\label{fig_discovery6}
	}
	\caption{%A service discovery example with targeted function chain segment $f_0, f_1, f_2$, where $I_0$, $I_1$, $I_2$ are discovery interest packets, and $D_0$, $D_1$, $D_2$ are discovery data packets. See more details in \cite{bib_long_version}. 
		%a)~The discovery initiator (node $1$) generates and sends out discovery interest packet $I_2$ searching for function $f_2$; b)~node $2$ having function $f_2$ generates and sends out discovery interest packets $I_1$ searching for function $f_1$ after receiving $I_2$, while keeping forwarding $I_2$; c)~node $4$ and $7$ having function $f_1$ generates and sends out discovery interest packet $I_0$ searching for function $f_0$ after receiving $I_1$, while keeping forwarding $I_2$ and $I_1$ where the forwarded $I_1$ copies are used for the mutual discovery between node $4$ and $7$; d)~node $5$ generates and sends back the discovery data packets $D_0$; e)~node $4$ and $7$ generate and send out data packets $D_1$ after receiving $D_0$, where copies of $D_1$ not only flow back to node $2$ but also flow between node $4$ and $7$ to complete their mutual discovery; f)~node $2$ generates discovery data packet $D_2$ after receiving $D_1$ and sends $D_2$ back to the discovery initiator.
		A service discovery example with targeted function chain segment $f_0, f_1, f_2$:
		a)~a discovery interest packet $I_2$ is generated and propagated until reaching node $2$ having function $f_2$; b)~node $2$ generates and send out discovery interest packets $I_1$, $I_2$; c)~generate discovery interest packets $I_0$ and forward $I_0$, $I_1$, $I_2$; d)~generate and forward discovery data packets $D_0$; e)~generate and forward discovery data packets $D_1$; f)~generate and forward discovery data packets $D_2$.
	}
	\vspace{-0.5cm}
	\label{fig_discovery_example}
\end{figure}

\section{Service Discovery}
\label{sec_service_discovery}
Although knowledge of compute/data resource distribution and network topology is helpful for making efficient orchestration decisions, efficiently fetching this knowledge in an arbitrary distributed network can be challenging. In this paper, we propose that, to prepare deploying a targeted function-chain segment, a node can initiate a \emph{function-chaining service discovery} to obtain an estimate of the minimum \emph{abstract distances} that the interest/data packets will travel via each face to reach through the function chain segment. %The obtained abstract distance values can be used in making the forwarding decisions in later orchestrations. 
In a multi-hop computing network, a function-chaining service discovery is composed by i) a phase of generating and propagating \emph{discovery interest packets} that sequentially search the targeted functions in the reverse chaining order and ii) a phase of sending back \emph{discovery data packets} gathering, carrying and updating the minimum abstract distance values. 
%\red{We define a \emph{discovery initiator} as the node that initiates a discovery procedure by generating a new discovery interest packet.} 

\subsection{Searching via Discovery Interest Packets}
\label{subsec_discovery_interest}
For the targeted function chain segment $(\phi,0), \cdots, (\phi,k)$, $0\le k \le K_\phi$, the discovery interest packets search the resources in order $(\phi,k)\rightarrow \cdots \rightarrow (\phi,0)$, where function $(\phi,k)$ is the \emph{immediate searching target}. Each discovery interest packet is forwarded by multi-cast. In context NDN, the name of a discovery interest packet $p$ can be $``(\phi,k) \leftarrow \cdots \leftarrow (\phi,0)"$. Once a node having function $(\phi,k)$ is reached by packet $p$, a new discovery interest packet $p'$ with name $``(\phi,k-1) \leftarrow \cdots \leftarrow (\phi,0)"$ is generated and multi-cast out to search for function $(\phi,k-1)$. In the meanwhile, packet $p$ is still forwarded out searching for other nodes having function $(\phi,k)$ if its hop limit is not reached, with the purpose of letting the current node know where to offload function $(\phi,k)$'s tasks in the service's deployment if needed. Fig. \ref{fig_discovery1}-\ref{fig_discovery3} present the searching phase of a service discovery example.

Fig. \ref{fig_discovery_example} shows service discovery for an example function segment $f_0, f_1, f_2$. Node 1 in Fig. \ref{fig_discovery1} initiates the discovery by generating a discovery interest packet $I_2$ with name $``f_2\leftarrow f_1\leftarrow f_0"$, and $I_2$ flows to node $2$ having function $f_2$. In Fig. \ref{fig_discovery2}, node 2 generates a new discovery interest packet $I_1$ with name $``f_1\leftarrow f_0"$, and copies of $I_1$ propagate and reach nodes 4 and 7 having function $f_1$; $I_2$ copies are also forwarded. Fig. \ref{fig_discovery3} shows that both node 4 and 7 generate the packets $I_0$ with name $``f_0"$, and copies of $I_0$ propagate and reach the data producer (node 5). In addition of $I_2$'s forwarding, node 4 and 7 forward $I_1$ copies that reach each other to prepare for their mutual discovery. 

With the purpose of detecting multi-paths, the \emph{loop detection} used in service discovery is different from the existing mechanism that is used in the standard NDN (see Ref. \cite{bib_nfd_guider}) based on detecting nonce number collision. The nonce number of each discovery interest packet is updated at each hop to avoid nonce duplication. A loop detection in service discovery enables the discovery interest packet to carry a \emph{stop list} that records the sequence of IDs of the traveled nodes. If a node having received a discovery interest packet detects that its own node-ID belongs to the packet's stop list, a looping is identified.

%\begin{rem}
%	The \emph{loop detection} used in service discovery is different from the existing mechanism, which is used in the standard NDN (see Ref. \cite{ref_nfd_guide}), based on detecting nonce number collision. The standard mechanism can not explore multiple paths having overlapping nodes but identify them as loops. For example, under the standard mechanism, when two discovery interest packet copies originating from the same initiator reach a node through two different paths, the node will drop the later arriving packet due to nonce collision, which may result in eliminating a choice of longer but low-congested route for later service deployment. In comparison, the loop detection in service discovery enables a discovery interest packet to carry a \emph{stop list} that records the sequence of IDs of the traveled nodes. If a node having received a discovery interest packet detects that its own node-ID belongs to the packet's stop list, a looping is identified. Moreover, for the purpose of co-existence with the standard NDN, the nonce number of each discovery interest packet is updated at each hop to avoid nonce duplication. 
%\end{rem}

\subsection{Feedback via Discovery Data Packets}
\label{subsec_discovery_data}
%The abstract distance values are gathered, updated and delivered by the discovery data packets. 
A discovery data packet with name $``(\phi,k)\leftarrow\cdots\leftarrow (\phi,0)"$ is generated in either of the two cases: i) for $k\ge 1$, a node having function $(\phi,k)$ has received/gathered one or multiple discovery data packets with name $``(\phi,k-1)\leftarrow\cdots\leftarrow (\phi,0)"$; ii) for $k = 0$, a node having function $(\phi,0)$, i.e., data producer, has received a discovery interest packet with name $``(\phi,0)"$. Each discovery data packet flows backward, i.e., if a discovery data packet with name $``(\phi,k)\leftarrow\cdots\leftarrow (\phi,0)"$ flows over link $(j,i)$, a discovery interest packet with the same name should have flowed over link $(i,j)$. Fig. \ref{fig_discovery4}-\ref{fig_discovery6} present the feedback phase of the service discovery example.

For the example in Fig. \ref{fig_discovery_example}, the data producer (node 5) in Fig. \ref{fig_discovery4} generates copies of discovery data packet $D_0$ with name $``f_0"$ that flow backward until reaching node 4 and 7. In Fig. \ref{fig_discovery5}, node 4 and 7 generate the discovery data packets $D_1$ with name $``f_1\leftarrow f_0"$, and $D_1$ copies not only flow to node 2 but also flow between node 4 and 7 as their mutual discovery feedback. Fig. \ref{fig_discovery6} shows that node 2 generates discovery data packet $D_2$ with name $``f_2\leftarrow f_1\leftarrow f_0"$ that flows back to the discovery initiator (node 1). 
\begin{algorithm}
	\caption{Update of the minimum abstract distances $L_i^{(\phi,k)}$
	%	and $L_j^{(\phi,k)}$, for $0\le k \le K_\phi$, $j\in {\cal O}(i)$, at node $i$
	}
	\label{alg_discovery}
	\begin{algorithmic}[1]
		\STATE Initialize $L_i^{(\phi,k)}=L_j^{(\phi,k)}=+\infty$ for all $0\le k \le K_\phi$.
		\FOR {$k=0$ to $K_\phi$}
		%\IF{a packet $I_{(\phi,k)}^{\text{d}}$ has been transmitted/forwarded}
		\IF {$k=0$ and node $i\in \mathcal P_{(\phi,0)}$}
		%\STATE $L_i^{(\phi,0)}\gets\min\{l_{i}^{\text{dp}}, \min\nolimits_{j\in \mathcal O(i)}\{l_{ij}+L_j^{(\phi,0)}\}\}$.
		\STATE $L_i^{(\phi,0)}\gets l_{i}^{\text{dp}}$.
		\ELSIF{$k>0$ and node $i\in \mathcal P_{(\phi,k)}$}
		\STATE $\Scale[0.95]{L_i^{(\phi,k)}\!\gets\!\min\{l_{i}^{\text{pr}} \!+\! L_i^{(\phi,k-1)}, \min\nolimits_{j\in \mathcal O(i)}\{l_{ij}\!+\!L_j^{(\phi,k)}\}\!\}}$. 
		\ELSE
		\STATE $L_i^{(\phi,k)}\gets \min\nolimits_{j\in \mathcal O(i)}\!\{l_{ij}+\!L_j^{(\phi,k)}\}$.
		\ENDIF
		%\ENDIF
		\ENDFOR
	\end{algorithmic}
\end{algorithm}
\begin{figure}
	\centering
	\includegraphics[width=3in]{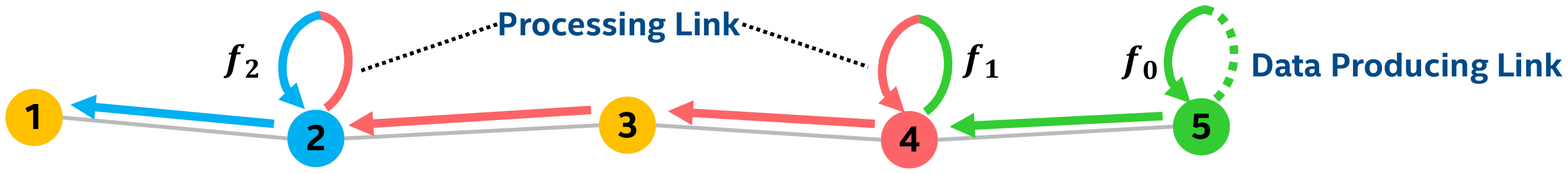}
	\caption{An example of augmented path.}
	\label{fig_augmented_path}
	\vspace{-0.5cm}
\end{figure}

We further model implementing a function (or data producing) at a node by going through a “processing link” (or “data producing link”).
%with link capacity dependent on the node’s processing capacity (or data producing capacity). 
As is shown in Fig. \ref{fig_augmented_path}, adding the processing links and data producing links onto a feedback routing path forms an \emph{augmented path}, which characterizes the trajectory traveled by a relaying series of data packets going through a function chain segment. These discovery data packets gather, update and deliver the minimum abstract distance values along the augmented path, and the path's abstract distance is calculated by summing the \emph{abstract lengths} of all the links on it. We measure an abstract link length as follows: i) $l_{ij}\triangleq \nicefrac{1}{\bar C_{ji}}$ for a communication link; ii) $l_{i}^{\text{pr}} = \nicefrac{r^{(\phi,k)}}{z^{(\phi,k)}C_i^{\text{pr}}}$ for a processing link; iii) $l_{i}^{\text{dp}}\triangleq \nicefrac{1}{C_i^{\text{dp}}}$ for a data producing link.
%\begin{itemize}
%	\item for a communication link $(i,j)$, its abstract length $l_{ij}$ can be defined as $l_{ij}\triangleq \nicefrac{1}{\bar C_{ji}}$;
%	\item for a processing link at computing node $i$, its abstract length $l_{i}^{\text{pr}}$ is defined as $l_{i}^{\text{pr}} = \nicefrac{r^{(\phi,k)}}{z^{(\phi,k)}C_i^{\text{pr}}}$;
%	\item for a data-producing link at data producer $i$, its abstract length $l_{i}^{\text{dp}}$ is $l_{i}^{\text{dp}}\triangleq \nicefrac{1}{C_i^{\text{dp}}}$.
%\end{itemize}

We denote by $L_i^{(\phi,k)}$ the minimum abstract distance to travel to node $i$ going through function chain segment $(\phi,0),\cdots,(\phi,k)$.
%, whose value is updated according to Algorithm \ref{alg_discovery} at each hop on the augmented path. 
%The minimum abstract distance value carried by the discovery data packet is updated at each hop on the augmented path. 
A node $i$ on an augmented path gathers all the discovery data packets during a time window with length $T_i^{I}$ (with unit of timeslot) which starts from transmitting a discovery interest packet $I$ with name $``(\phi,k)\leftarrow\cdots\leftarrow (\phi,0)"$, where node $i$ determines the deadline of accepting the discovery data packets based on $T_i^{I}$. The $T_i^{I}$ value decreases by $1$ timeslot as packet $I$ travels, i.e., $T_j^{I}=T_i^{I}-1$, if packet $I$ flows over link $(i,j)$. At the end of the time window, node $i$ calculates/updates the minimum abstract distance values, and the updates are shown according to Algorithm \ref{alg_discovery}. 
%\begin{algorithm}
%	\caption{Algorithm for ...}
%	\begin{algorithmic}[1]
%		\renewcommand{\algorithmicrequire}{\textbf{Input:}}
%		\renewcommand{\algorithmicensure}{\textbf{Output:}}
%		\REQUIRE in
%		\ENSURE  out
%		\\ \textit{Initialisation} :
%		\STATE first statement
%		\\ \textit{LOOP Process}
%		\FOR {$i = l-2$ to $0$}
%		\STATE statements..
%		\IF {($i \ne 0$)}
%		\STATE statement..
%		\ENDIF
%		\ENDFOR
%		\RETURN $P$ 
%	\end{algorithmic} 
%\end{algorithm}

\section{Service Discovery Assisted Dynamic Orchestration (SDADO)}
\label{sec_sdado}
In context of NDN, we propose the SDADO framework that enables each node to make orchestration decisions for the arrived and queued interest packets based on the observed 1-hop-range interest backlogs and the discovered minimum abstract distances. Given Assumption \ref{assume_L_quantitization} and \ref{assume_integer_multiple}, Algorithm \ref{alg_data_producing}-\ref{alg_FwdRateAsgmt} present SDADO at each node $i$ in timeslot $t$.
\begin{assume}
	\label{assume_L_quantitization}
	If $L_i^{(\phi,k)}\neq L_j^{(\phi',k')}$, there exists a constant $ \delta_L$ such that $|L_i^{(\phi,k)}-L_j^{(\phi',k')}|\ge\delta_L$ for all $ (\phi,k), (\phi',k'), i,j$.
\end{assume}
\begin{assume}
	\label{assume_integer_multiple}
	$\bar C_{ji}$, $C_i^{\emph{dp}}$, $C_i^{\emph{pr}}$, $z^{(\phi,k)}$, $r^{(\phi,k)}$ satisfy at least one of the two conditions: i) $\bar C_{ji}$ and $C_i^{\emph{dp}}$ are integer multiples of $z^{(\phi,k)}$, and $C_i^{\emph{pr}}$ is integer multiples of $r^{(\phi,k)}$, for all $(i,j), \phi,k$; ii) $z^{(\phi,k)}=z$ and $r^{(\phi,k)}=r$ for all $\phi,k$.
\end{assume}
Assumption \ref{assume_L_quantitization} is easily satisfied given quantization of abstract distance values, where $\delta_L$ can be the quantization granularity. 
%As will be shown in Sec. \ref{sec_analysis}, 
Assumption \ref{assume_integer_multiple} is needed to guarantee the throughput optimality of Algorithm \ref{alg_data_producing}-\ref{alg_FwdRateAsgmt}, where the commodity chosen to allocate resource in each timeslot gets the full capacity. SDADO does not loose throughput-optimality even without Assumption \ref{assume_integer_multiple} but requires \emph{knapsack} operations \cite{bib_Algorithm} to allocate resources (see Section \ref{sec_analysis} for details).
\begin{algorithm}
	\caption{SDADO-data-producing at node $i\in \cup_{\phi} \mathcal P_{(\phi,0)}$ in timeslot $t$}
	\label{alg_data_producing}
	\begin{algorithmic}[1]
		\REQUIRE $U_i^{(\phi,0,c)}(t), \forall (\phi,c)$.
		\ENSURE $p_i^{(\phi,0,c)}(t), \forall (\phi,c)$.
		\FOR {all $(\phi,c)$ such that $i\in \mathcal P_{(\phi,0)}$}
		\STATE Initialize $p_i^{(\phi,0,c)}(t)\gets 0$.
		\ENDFOR
		\STATE Choose $(\phi^*,c^*)\gets \arg\max_{(\phi,c)}\{U_i^{(\phi,0,c)}(t)\}$.\label{line_dp_mwm_begin}
		\IF{$U_i^{(\phi^*,0,c^*)}(t) > 0$}
		\STATE $p_i^{(\phi^*,0,c^*)}(t)\gets \frac{C_i^{\text{dp}}}{z^{(\phi^*,0)}}$.
		\ENDIF\label{line_dp_mwm_end}
	\end{algorithmic}
\end{algorithm}

SDADO-data-producing and -processing-commitment are respectively described in Algorithm \ref{alg_data_producing} and \ref{alg_proc_commitment}, where we define the \emph{virtual backlog} $U_i^{(\phi,k,c)}(t)\triangleq z^{(\phi,k)}Q_i^{(\phi,k,c)}(t)$ as interest commodity $(\phi,k,c)$ backlog weighted by data packet's length. Both Algorithm \ref{alg_data_producing} and \ref{alg_proc_commitment} adopt \emph{max-weight-matching} to allocate data producing rates and processing commitment rates. Particularly in Algorithm \ref{alg_proc_commitment}, 
%weight $W_{i,\text{pr}}^{(\phi,k,c)}(t)$ 
the operations in line \ref{line_Wpr}, \ref{line_pr_mwm_begin} demonstrate that SDADO allocates the processing capacity to the function with local high demand of input commodity interests and low demand of output commodity interests.
%represents the urgency of committing to implement function $(\phi,k)$ to reduce the network traffic, in terms of differential virtual backlog between commodity $(\phi,k,c)$ and $(\phi,k-1,c)$ normalized by processing complexity and data packet length. SDADO allocates the processing capacity to the function with local high demand of input commodity interest and low demand of output commodity interest.
\begin{algorithm}
	\caption{SDADO-processing-commitment at node $i\in \cup_{\phi} \mathcal P_{(\phi,k>0)}$ in timeslot $t$}
	\label{alg_proc_commitment}
	\begin{algorithmic}[1]
		\REQUIRE $U_i^{(\phi,k,c)}(t), \forall (\phi,k>0,c)$
		\ENSURE $p_i^{(\phi,k,c)}(t), \forall (\phi,k>0,c)$
		\FOR {all $(\phi,k,c)$ such that $i\in \mathcal P_{(\phi,k)}$}
		\STATE  $W_{i,\text{pr}}^{(\phi,k,c)}\!(t)\! \gets\! \frac{U_i^{(\phi,k,c)}\!(t)z^{(\phi,k)}-U_i^{(\phi,k-1,c)}\!(t)z^{(\phi,k-1)}}{r^{(\phi,k)}}$.\label{line_Wpr}
		\STATE Initialize $p_i^{(\phi,k,c)}(t)\gets 0$.
		\ENDFOR
		\STATE Choose $(\phi^*,k^*,c^*)\gets \arg\max_{(\phi,k,c)}\{W_{i,\text{pr}}^{(\phi,k,c)}(t)\}$.\label{line_pr_mwm_begin}
		\IF{$W_{i,\text{pr}}^{(\phi,k,c)}(t)>0$}
		\STATE $p_i^{(\phi^*,k^*,c^*)}(t)\gets \frac{C_i^{\text{pr}}}{r^{(\phi^*,k^*)}}$.
		\ENDIF\label{line_pr_mwm_end}
	\end{algorithmic}
\end{algorithm}
\begin{algorithm}
	\caption{Pre-processing for online interest forwarding}
	\label{alg_forward_preparation}
	\begin{algorithmic}[1]
		\REQUIRE ${\bf L}_i\triangleq \{L_v^m: \forall v \in \{i\} \cup \mathcal O(i), \forall m\}$.
		\ENSURE ${\bf J}_{i,m}^{\uparrow}$, $\forall m$; $\Delta L_{ij}^m$, $\mathcal M_{ij}^m \triangleq (\mathcal M_{ij,m}^{>}, \mathcal M_{ij,m}^{=}, \mathcal M_{ij,m}^{<})$, $\forall m$, $\forall j\in \mathcal O(i)$.
		\FOR{ $\forall m$}
		\FOR{$\forall j\in \mathcal O(i)$}
		\STATE Calculate $\tilde L_{i,j}^{m} \!\gets\! l_{ij} + L_j^{m}$; $\Delta L_{ij}^{m} \!\gets\! L_i^{m}\!-\!L_j^{m}$.\label{line_delta_L}
		\ENDFOR
		%			\STATE Sort $\Delta L_{ij}^{m}: \forall j\in \mathcal O(i)\}$ to $\Delta L_{ij_1}^{m} ,\ge \cdots \ge  \Delta L_{ij_{|\mathcal O(i)|}}^{m}$, and obtain sequence ${\bf J}_{i,m}^{\downarrow}\gets [j_1,\cdots,j_{|\mathcal O(i)|}]$. \label{line_sort_deltaL_amg_links}
		\STATE Sort $\{\tilde L_{i,j}^{m}: \forall j\in \mathcal O(i)\}$ to $\tilde L_{i,j_1}^{m} \le \cdots \le \tilde L_{i,j_{|\mathcal O(i)|}}^{m}$, and obtain sequence ${\bf J}_{i,m}^{\uparrow}\gets [j_1,\cdots,j_{|\mathcal O(i)|}]$. \label{line_sort_deltaL_amg_links}
		\ENDFOR
		\FOR{$\forall j\in \mathcal O(i)$}
		\STATE Sort $\{\Delta L_{ij}^{m}: \forall m\}$ to $\Delta L_{ij}^{m_1} \ge \cdots \ge \Delta L_{ij}^{m_N}$, and obtain sequence ${\bf M}_{ij}^{\downarrow}\gets [m_1,\cdots, m_N]$.\label{line_sort_deltaL_amg_commo}
		\FOR{$n=1$ to $N$}\label{line_generate_M_sets_begin}
		\STATE $\hat m\triangleq {\bf M}_{ij}^{\downarrow}[n]$;
		\STATE Find $\mathcal M_{ij,\hat m}^{>}\gets \{m: \Delta L_{ij}^{m} > \Delta L_{ij}^{\hat m}\}$; \\$\mathcal M_{ij,\hat m}^{=}\gets \{m: \Delta L_{ij}^{m} = \Delta L_{ij}^{\hat m}\}$; \\$\mathcal M_{ij,\hat m}^{<}\gets \{m: \Delta L_{ij}^{m} < \Delta L_{ij}^{\hat m}\}$.
		\ENDFOR \label{line_generate_M_sets_end}
		\ENDFOR
	\end{algorithmic}
\end{algorithm}

Algorithm \ref{alg_forward_preparation}-\ref{alg_FwdRateAsgmt} present SDADO-interest-forwarding based on ${\bf U}_i(t)$ and ${\bf L}_i$. We denote $m\!\triangleq\! (\phi,k,c)$ and $N\!\triangleq\!|\{(\phi,k,c)\}|$ for convenience.
%With service discovery's assistance, each node $i$ makes local forwarding decisions based on not only the observed virtual backlogs, denoted by vector ${\bf U}_i(t)$, within one hop neighborhood, but also the estimated minimum abstract distances, denoted by vector ${\bf L}_i$, obtained through service discovery, where $L_i^m = L_i^{(\phi,k)}$ for all $c\in \mathcal C_\phi$. 
Algorithm \ref{alg_forward_preparation} describes the pre-processing on ${\bf L}_i$ before SDADO online operations. Algorithm \ref{alg_fowarding} describes the SDADO online interest forwarding: node $i$ prioritizes the commodities by sorting metrics $\Theta_i^m(t)$ (line \ref{line_sort_Theta}), which intuitively implies giving higher forwarding priority to interest commodity with higher delay-reduction gain weighted by congestion-reduction gain, and then schedules commodities for forwarding according to the priority order, where $AsgmtInd_{ij}(t)$ indicates whether link $(i,j)$ has been scheduled. Algorithm \ref{alg_FwdRateAsgmt} is called in line \ref{line_call_asgmt} of Algorithm \ref{alg_fowarding} judging whether to forward a commodity $\tilde m'$ over link $(i,j')$. 
%; $Ind(*)$ in Algorithm \ref{alg_FwdRateAsgmt} is indicator function of $*$. More detailed explanations are shown in Ref. \cite{bib_long_version}.

\begin{algorithm}
	\caption{SDADO-interest-forwarding at node $i\in \mathcal V$ in timeslot $t$}
	\label{alg_fowarding}
	\begin{algorithmic}[1]
		\REQUIRE ${\bf U}_i(t)\triangleq \{U_v^m(t): \forall v \!\in\! \{i\} \cup \mathcal O(i), \forall m\}$; 		
		%		${\bf J}_{i,m}^{\downarrow}$, $\forall m$; $\Delta L_{ij}^m$, $\mathcal M_{ij}^m$, $\forall m$, $\forall j\in \mathcal O(i)$; ${\bm \theta}_i \triangleq [\alpha_i, \beta_i, \gamma_i]$.
		${\bf J}_{i,m}^{\uparrow}$, $\forall m$; $\Delta L_{ij}^m$, $\mathcal M_{ij}^m$, $\forall m$, $\forall j\in \mathcal O(i)$; $h_i\triangleq 2\sum_{j\in \mathcal O(i)}{\bar C_{ji}}$.
		\ENSURE $b_{ij}^{m}(t)$, $\forall m$, $\forall j\in \mathcal O(i)$
		\FOR{$\forall m$}
		\FOR{$\forall j\in \mathcal O(i)$}
		\STATE Calculate $\Delta U_{ij}^{m}(t) \gets U_i^{m}(t)-U_j^{m}(t)$.\label{line_diff_U}
		\STATE Initialize $b_{ij}^{m}(t)\gets 0$;
		\ENDFOR
		\STATE Calculate $\Theta_i^m(t)\gets \sum\nolimits_{j\in \mathcal O(i)}{\left[\Delta U_{ij}^{m}(t)\right]^+\left[\Delta L_{ij}^{m}(t)\right]^+}$.\label{line_Theta}
		\ENDFOR
		\STATE Sort $\{\Theta_i^m(t): \forall m\}$ to $\Theta_i^{\tilde m_1}(t)\ge\cdots\ge \Theta_i^{\tilde m_M}(t)$ and obtain sequence 
		${\tilde {\bf M}}_{i}^{\downarrow} \gets  [\tilde m_1,\cdots,\tilde m_M]$.\label{line_sort_Theta}
		%		\STATE Denote $\Delta {\bf U}_{ij}(t)\!\!\triangleq\!\! \{\Delta U_{ij}^m(t):\forall m\}$, $\Delta {\bf L}_{ij}\!\!\triangleq\!\! \{\Delta L_{ij}^m:\forall m\}$, and ${\bf M}_{ij}\triangleq \{\mathcal M_{ij}^{m}: \forall m\}$, for all $j\in \mathcal O(i)$
		\STATE Denote $\Delta {\bf U}_{ij}(t)\!\!\triangleq\!\! \{\Delta U_{ij}^m(t):\forall m\}$, $\Delta {\bf L}_{ij}\!\!\triangleq\!\! \{\Delta L_{ij}^m:\forall m\}$, for all $j\in \mathcal O(i)$.
		\STATE Initialize $AsgmtInd_{ij}(t)\gets\FALSE$ for all $j\in \mathcal O(i)$.\label{line_asgmt_begin}
		\FOR{$n=1$ to $N$}
		\STATE $\tilde m' \triangleq {\tilde {\bf M}}_{i}^{\downarrow}[n]$
		\FOR{$s=1$ to $|\mathcal O(i)|$}
		\STATE $j' \triangleq {\bf J}_{i,\tilde m'}^{\uparrow}[s]$;
		\IF{$AsgmtInd_{ij'}(t)=\FALSE$}
		\STATE FwdRateAsgmt$(\Delta {\bf U}_{ij'}(t), \Delta {\bf L}_{ij'}, \mathcal M_{ij'}^{\tilde m'},h_i)$.\label{line_call_asgmt}
		\ENDIF
		\ENDFOR
		\ENDFOR\label{line_asgmt_end}
	\end{algorithmic}
\end{algorithm}
\begin{algorithm}
	\caption{FwdRateAsgmt}
	\label{alg_FwdRateAsgmt}
	\begin{algorithmic}[1]
		\REQUIRE $\Delta {\bf U}_{ij'}(t)$, $\Delta {\bf L}_{ij'}$, $\mathcal M_{ij'}^{\tilde m'}$, $h_i$.
		\ENSURE $b_{ij'}^{\tilde m'}(t)$, $AsgmtInd_{ij'}(t)$
		\IF {$ \nexists m\in \mathcal M_{ij',\tilde m'}^{>}$ satisfying $\Delta U_{ij'}^{m}(t) \ge \Delta U_{ij'}^{\tilde m'}(t)$ 
			\AND $\nexists m\in \mathcal M_{ij',\tilde m'}^{=}$ satisfying $\Delta U_{ij'}^{m}(t) > \Delta U_{ij'}^{\tilde m'}(t)$
			%			\AND $\nexists m\in \mathcal M_{ij',\tilde m'}^{<}$ satisfying $\Delta U_{ij'}^{m}(t) > \Delta U_{ij'}^{\tilde m'}(t) + \gamma_i\bar C_{j'i}$
			%			\AND $Ind(\Delta L_{ij'}^{\tilde m'}>0) Ind(\Delta U_{ij'}^{\tilde m'}\ge -\alpha_i \bar C_{j'i}) + Ind(\Delta L_{ij'}^{\tilde m'}\le 0)Ind(\Delta U_{ij'}^{\tilde m'}\ge \beta_i \bar C_{j'i})>0$}\label{line_4conditions}
		\AND $\nexists m\in \mathcal M_{ij',\tilde m'}^{<}$ satisfying $\Delta U_{ij'}^{m}(t) > \Delta U_{ij'}^{\tilde m'}(t) + h_i$
		\AND $\mathbbm{1}(\Delta L_{ij'}^{\tilde m'}\!\!>\!\! 0)\mathbbm{1}(\Delta U_{ij'}^{\tilde m'}\!(t) \!\!\ge\!\! -h_i) \!+\! \mathbbm{1}(\Delta L_{ij'}^{\tilde m'}\!\le\! 0)\mathbbm{1}(\Delta U_{ij'}^{\tilde m'}(t)\ge h_i)>0$}\label{line_4conditions}
	\FOR{$\forall m \in \mathcal M_{ij',\tilde m'}^{>} \cup \mathcal M_{ij',\tilde m'}^{<}$}\label{line_rate_assign_begin}
	\STATE Calculate 
	$\omega_{ij'}^{m,\tilde m'}(t)\triangleq \frac{\Delta U_{ij'}^{m}(t)-\Delta U_{ij'}^{\tilde m'}(t)}{\Delta L_{ij'}^{\tilde m'}-\Delta L_{ij'}^m}$.\label{line_omega}
	\ENDFOR
	\STATE Calculate ${ B_{ij',\tilde m'}^{\text{lb}}(t)\!\!\gets\!\! \left[\max\nolimits_{m\in \mathcal M_{ij',{\tilde m'}}^{<}}\{\omega_{ij'}^{m,\tilde m'}\!(t)\}\!\right]^{\!+}}$;\label{line_B_hat}
	${ B_{ij'\!,\tilde m'}^{\text{ub}}\!(t)\gets \min\nolimits_{m\in \mathcal M_{ij'\!,{\tilde m'}}^{>}}\!\{\omega_{ij'}^{m,\tilde m'}\!\!(t)\}}$.\label{line_bounds_begin}
	\IF{$\Delta L_{ij'}^{\tilde m'}>0$}
	%				\STATE $\Scale[0.95]{\tilde B_{ij'\!,\tilde m'}^{\text{lb}}\!(t)\!\!\gets\!\! \max\{\!B_{ij'\!,\tilde m'}^{\text{lb}}\!(t),\hat B_{ij'}^{\tilde m'}\!\}}$; $\Scale[0.95]{\tilde B_{ij'\!,\tilde m'}^{\text{ub}}\!(t)\!\!\gets\!\! B_{ij'\!,\tilde m'}^{\text{ub}}\!(t)}$.
	\STATE Update $B_{ij'\!,\tilde m'}^{\text{lb}}(t)\gets \max\{\! B_{ij'\!,\tilde m'}^{\text{lb}}(t),\frac{-\Delta U_{ij'}^{\tilde m'}\!(t)}{\Delta L_{ij'}^{\tilde m'}}\}$.\label{line_B_1}
	\ELSIF{$\Delta L_{ij'}^{\tilde m'}<0$}
	\STATE Update ${ B_{ij'\!,\tilde m'}^{\text{ub}}\!(t)\gets \min\{ B_{ij'\!,\tilde m'}^{\text{ub}}\!(t),\frac{-\Delta U_{ij'}^{\tilde m'}\!(t)}{\Delta L_{ij'}^{\tilde m'}}\}}$.\label{line_B_2}
	\ENDIF\label{line_bounds_end}			
	\IF {$B_{ij',\tilde m'}^{\text{ub}} \ge B_{ij',\tilde m'}^{\text{lb}}$}\label{line_rate_assign_begin2}
	\STATE $b_{ij'}^{\tilde m'}(t)\gets \nicefrac{\bar C_{j'i}}{z^{\tilde m'}}$.\label{line_bij}
	\STATE $AsgmtInd_{ij'}(t)\gets \TRUE$.\label{line_asgmtInd}
	\ENDIF\label{line_rate_assign_end}
	\ENDIF
\end{algorithmic}
\end{algorithm}

As is shown in Algorithm \ref{alg_forward_preparation}, node $i$ pre-processes ${\bf L}_i$ to prepare for making the forwarding decisions by calculating the differential abstract distances $\Delta L_{ij}^m$ (line \ref{line_delta_L}), and then sorting all the $\Delta L_{ij}^m$ among all the outgoing links for a single commodity (line \ref{line_sort_deltaL_amg_links}) and among all the commodities over a single link (line \ref{line_sort_deltaL_amg_commo}). 
%The generation of sets $\mathcal M_{ij}^{m}$ for all $m$ over a single link $(i,j)$ (line \ref{line_generate_M_sets_begin}-\ref{line_generate_M_sets_end}) can be implemented recursively via going through the sorted order ${\bf M}_{ij}^{\downarrow}$ in linear time. 
This pre-processing only needs to run once before the online implementation of SDADO in stationary networks or repeat infrequently in networks with time-varying topology. 

Algorithm \ref{alg_fowarding} describes the SDADO online forwarding of interest packets at node $i$ in timeslot $t$. Node $i$ first prioritizes the commodities to be forwarded by the sorting metric values $\Theta_i^m(t)$ (line \ref{line_sort_Theta}), which is calculated based on $\Delta U_{ij}^m(t)$ and $\Delta L_{ij}^{m}$ . The intuition behind $\Theta_i^m(t)$ is reflected in its formulation (line \ref{line_Theta}). SDADO gives higher priority to forwarding the interest commodity which has higher total delay reduction gain characterized by $\Delta L_{ij}^{m}$, weighted by higher traffic reduction urgency characterized by the differential virtual backlog $\Delta U_{ij}^{m}(t)$, over all the outgoing links. The priority determines scheduling order of forwarding rate assignment among commodities. The scheduling is described in line \ref{line_asgmt_begin}-\ref{line_asgmt_end}, where $AsgmtInd_{ij}(t)$ indicates whether link $(i,j)$ has been scheduled for forwarding. The complexity of SDADO on interest forwarding over a link per timeslot is $O(N^2)$ in the worst case. 

Algorithm \ref{alg_FwdRateAsgmt} describes ``FwdRateAsgmt" (called in line \ref{line_call_asgmt} of Algorithm \ref{alg_fowarding}) judging whether to allocate the average reverse link capacity of a outgoing link $(i,j')$ to a targeted commodity $\tilde m'$. Line \ref{line_4conditions} shows four conditions that commodity $\tilde m'$ has to satisfy in order to be scheduled. The operations in line \ref{line_rate_assign_begin}-\ref{line_rate_assign_end} are used to further determine the eligibility of allocating forwarding resource to interest commodity $\tilde m'$ over link $(i,j')$ to maintain network stability, which involves comparing two bounding values $B_{ij',\tilde m'}^{\text{ub}}$ and $B_{ij',\tilde m'}^{\text{lb}}$ calculated based on the metrics $\omega_{ij'}^{m,\tilde m'}(t)$ (line \ref{line_omega} and \ref{line_bounds_begin}-\ref{line_bounds_end}). As will be shown in Section \ref{sec_analysis}, scheduling the commodity that satisfy the four conditions and the condition $B_{ij',\tilde m'}^{\text{ub}} \ge B_{ij',\tilde m'}^{\text{lb}}$ in each timeslot guarantees the network stability.

\section{Algorithm Analysis}
\label{sec_analysis}
In this section, we analyze throughput-optimality of SDADO and then further describe the SDADO algorithm without Assumption \ref{assume_integer_multiple}. We start with the following lemma. 
\begin{lem}
	\label{lemma_sdado_routing}
	With Assumption \ref{assume_L_quantitization} and \ref{assume_integer_multiple}, there exists a coefficient $\eta_{ij}(t)$ satisfying $0\le \eta_{ij}(t)\le \nicefrac{h_i}{\delta_L}$, $\forall (i,j),t$, such that the optimal solution ${\bf b}_{ij}^*(t)\triangleq \{{b_{ij}^{m}}^*(t)\}$ of the problem 
	\begin{subequations}
		\label{eq_sdado-routing_opt}
		\begin{align}
				&\text{max} &&
				\Scale[1]{\sum\nolimits_{m}{\left[\Delta U_{ij}^m(t)+\eta_{ij}(t)\Delta L_{ij}^m\right]z^m b_{ij}^m(t)},}    \label{eq_sdado_routing_obj}\\
				&\text{s.t.} && \Scale[1]{\sum\nolimits_{m}{z^mb_{ij}^m(t)} \le \bar C_{ji},\ \ b_{ij}^m(t)\in \mathbb{Z}^+,}   \label{eq_sdado_routing_cap_constraint}
			\end{align}
	\end{subequations} 
	and the SDADO-interest-forwarding solution ${\bf b}_{ij}(t)$ obtained in Algorithm \ref{alg_forward_preparation}-\ref{alg_FwdRateAsgmt} satisfy $Z_{\eta_{ij}(t)}({\bf b}_{ij}(t))\ge Z_{\eta_{ij}(t)}({\bf b}_{ij}^*(t)) - [\nicefrac{l_{ij}}{\delta_L}-1]h_i \bar C_{ji}$, where $Z_{\eta_{ij}(t)}(*)$ represents the objective function \eqref{eq_sdado_routing_obj} given parameter $\eta_{ij}(t)$.
\end{lem}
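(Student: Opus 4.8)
\emph{Proof strategy.} The plan is to (i) use Assumption~\ref{assume_integer_multiple} to collapse \eqref{eq_sdado-routing_opt} to a single-commodity choice, (ii) recognise the tests inside Algorithm~\ref{alg_FwdRateAsgmt} as an exact feasibility test for the bias $\eta_{ij}(t)$, and then (iii) split on whether Algorithm~\ref{alg_fowarding} commits capacity on link $(i,j)$. Fix $(i,j)$ and $t$, write $w_m(\eta)\triangleq\Delta U_{ij}^m(t)+\eta\,\Delta L_{ij}^m$, and let $M(\eta)\triangleq\max_m w_m(\eta)$, a convex piecewise-linear function of $\eta$.

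Rewriting \eqref{eq_sdado_routing_obj} in the ``data-mass'' variables $x_m\triangleq z^m b_{ij}^m(t)$ turns \eqref{eq_sdado-routing_opt} into $\max\{\sum_m w_m(\eta_{ij}(t))\,x_m:\ \sum_m x_m\le\bar C_{ji},\ x_m\in z^m\mathbb{Z}^+\}$; under Assumption~\ref{assume_integer_multiple} a single commodity absorbs the whole budget ($\bar C_{ji}\in z^m\mathbb{Z}^+$ in case~i, $z^m\equiv z$ in case~ii), so an optimum loads all of $\bar C_{ji}$ on some $m^\star\in\arg\max_m w_m(\eta_{ij}(t))$ when $M(\eta_{ij}(t))>0$ and is ${\bf 0}$ otherwise, giving $Z_{\eta_{ij}(t)}({\bf b}_{ij}^*(t))=[\,M(\eta_{ij}(t))\,]^+\bar C_{ji}$ --- exactly the single-commodity form Algorithm~\ref{alg_fowarding} produces ($b_{ij}^{\hat m}(t)=\bar C_{ji}/z^{\hat m}$ for at most one commodity $\hat m$). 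I will also use $\Delta L_{ij}^m\le l_{ij}$ for every $m$ (immediate from $L_i^m\le l_{ij}+L_j^m$, which Algorithm~\ref{alg_discovery} enforces) and, from Assumption~\ref{assume_L_quantitization}, that distinct values among $\{\Delta L_{ij}^m\}$ differ by at least $\delta_L$; hence at most $l_{ij}/\delta_L$ distinct positive $\Delta L_{ij}^m$ occur and, whenever one does, $l_{ij}\ge\delta_L$, so $[\,l_{ij}/\delta_L-1\,]h_i\bar C_{ji}\ge0$.

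The core step is to check that, for a commodity $\tilde m'$ on $(i,j)$, the four predicates of line~\ref{line_4conditions} together with $B_{ij,\tilde m'}^{\text{ub}}\ge B_{ij,\tilde m'}^{\text{lb}}$ hold exactly when there is $\eta\in[0,h_i/\delta_L]$ with $\tilde m'\in\arg\max_m w_m(\eta)$, $w_{\tilde m'}(\eta)\ge0$, and $\Delta U_{ij}^m(t)-\Delta U_{ij}^{\tilde m'}(t)\le h_i$ for every $m$ with $\Delta L_{ij}^m<\Delta L_{ij}^{\tilde m'}$. Indeed, each inequality $w_m(\eta)\le w_{\tilde m'}(\eta)$, after dividing by $\Delta L_{ij}^{\tilde m'}-\Delta L_{ij}^m$ (negative on $\mathcal M_{ij,\tilde m'}^{>}$, zero on $\mathcal M_{ij,\tilde m'}^{=}$, positive on $\mathcal M_{ij,\tilde m'}^{<}$), becomes $\eta\le\omega_{ij}^{m,\tilde m'}(t)$, a $\Delta U$-dominance relation, and $\eta\ge\omega_{ij}^{m,\tilde m'}(t)$ respectively; intersecting these with $\eta\ge0$ and with the sign-dependent form of $w_{\tilde m'}(\eta)\ge0$ reproduces precisely the bounds $B_{ij,\tilde m'}^{\text{lb}}(t)$, $B_{ij,\tilde m'}^{\text{ub}}(t)$ of lines~\ref{line_B_hat}--\ref{line_bounds_end}. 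Condition~3 is the stability-gap clause, and through $|\Delta L_{ij}^{\tilde m'}-\Delta L_{ij}^m|\ge\delta_L$ it caps $\omega_{ij}^{m,\tilde m'}(t)\le h_i/\delta_L$ on $\mathcal M_{ij,\tilde m'}^{<}$, keeping the admissible $\eta$ inside $[0,h_i/\delta_L]$, while condition~4 keeps the $w_{\tilde m'}(\eta)\ge0$ refinement in the same interval. (Ties among maximizers are broken toward larger $\Delta L_{ij}^m$, matching the non-strict ``$\ge$'' in condition~1.)

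If Algorithm~\ref{alg_fowarding} commits some $\hat m$ on $(i,j)$, set $\eta_{ij}(t)\triangleq B_{ij,\hat m}^{\text{lb}}(t)\in[0,h_i/\delta_L]$; then $\hat m\in\arg\max_m w_m(\eta_{ij}(t))$ with $w_{\hat m}(\eta_{ij}(t))\ge0$, so the SDADO action is an optimal solution of \eqref{eq_sdado-routing_opt} and $Z_{\eta_{ij}(t)}({\bf b}_{ij}(t))=Z_{\eta_{ij}(t)}({\bf b}_{ij}^*(t))$, comfortably inside the claim. The hard case is when nothing is committed on $(i,j)$: then ${\bf b}_{ij}(t)={\bf 0}$, $Z_\eta({\bf b}_{ij}(t))=0$ for all $\eta$, and I need $\eta_{ij}(t)\in[0,h_i/\delta_L]$ with $M(\eta_{ij}(t))\le[\,l_{ij}/\delta_L-1\,]h_i$. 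Applying the characterization to the commodity $\hat m_U$ of largest $\Delta U_{ij}^m(t)$ (ties toward larger $\Delta L_{ij}^m$), for which conditions~1--3 and the $B$-test hold automatically, shows that condition~4 must be what blocks it --- i.e. either $\Delta L_{ij}^{\hat m_U}>0$ and $\Delta U_{ij}^{\hat m_U}(t)<-h_i$, so that $M(0)=\max_m\Delta U_{ij}^m(t)<0$ and $\eta_{ij}(t)=0$ works, or $\Delta L_{ij}^{\hat m_U}\le0$ and $\max_m\Delta U_{ij}^m(t)<h_i$. In the latter case $\eta_{ij}(t)=0$ already gives $M(0)<h_i\le[\,l_{ij}/\delta_L-1\,]h_i$ whenever $l_{ij}/\delta_L\ge2$; the residual regime $1\le l_{ij}/\delta_L<2$ additionally uses the commodity of largest $\Delta L_{ij}^m$ (whose $\mathcal M^{>}$ is empty, so only condition~3 can block it) and takes $\eta_{ij}(t)$ at the minimizer of $M$ over $[0,h_i/\delta_L]$, bounding $M(\eta_{ij}(t))$ by telescoping the condition-3 failures up the at-most-$l_{ij}/\delta_L$ distinct positive $\Delta L$-levels --- each level raising the envelope by at most $h_i$ over the previous one, with $\Delta L_{ij}^m\le l_{ij}$ capping the number of steps --- which is exactly where the factor $[\,l_{ij}/\delta_L-1\,]h_i$ appears. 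Carrying out this telescoping cleanly --- reconciling the $\eta$-free conditions~1--4 with the $\eta$-dependent envelope $M$ and treating the boundary levels $\Delta L_{ij}^m\in\{0,l_{ij}\}$ --- is the step I expect to be the main obstacle; the rest is sign bookkeeping around $\delta_L$.
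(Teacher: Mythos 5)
Your overall strategy matches the paper's: introduce a per-link bias $\eta_{ij}(t)$ tied to the algorithm's $B^{\text{lb}}/B^{\text{ub}}$ bounds, show $\eta_{ij}(t)\le h_i/\delta_L$ via Condition~3 and $|\Delta L_{ij}^{m}-\Delta L_{ij}^{m'}|\ge\delta_L$, split on whether a commodity is committed over $(i,j)$, observe exact optimality in the committed case, and bound the gap in the idle case through the failure of Condition~4. The single-commodity collapse under Assumption~\ref{assume_integer_multiple}, the interpretation of line~\ref{line_4conditions} as an $\eta$-feasibility test, and the role of $\Delta L_{ij}^m\le l_{ij}$ are all the same as in the paper's proof (which defines $W_{ij}(x)\triangleq\max_{m'}\{\Delta U_{ij}^{m'}(t)+x\Delta L_{ij}^{m'}\}$ and shows $Z_x^*=[W_{ij}(x)]^+\bar C_{ji}$).

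There are, however, two problems with the proposal. First, your claim that conditions~1--3 \emph{and the full $B$-test} hold automatically for the commodity $\hat m_U$ of largest $\Delta U_{ij}^m(t)$ is not correct: only the ``hat'' bounds $\hat B_{ij,\hat m_U}^{\text{lb}}\le\hat B_{ij,\hat m_U}^{\text{ub}}$ come for free, while the clamp $\tfrac{-\Delta U_{ij}^{\hat m_U}(t)}{\Delta L_{ij}^{\hat m_U}}$ in lines~\ref{line_B_1}--\ref{line_B_2} of Algorithm~\ref{alg_FwdRateAsgmt} can produce $B^{\text{lb}}>B^{\text{ub}}$ even when $\hat B^{\text{lb}}\le\hat B^{\text{ub}}$. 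The paper's case~ii) deliberately requires only conditions~1--3 together with the hat-bound inequality, which is exactly what can be guaranteed. Second, and more importantly, the ``telescoping over $\Delta L$-levels'' you invoke for the regime $1\le l_{ij}/\delta_L<2$ is not carried out --- you explicitly flag it as the main obstacle --- and it is also not what the paper does. The paper's case~ii) argument is a one-line plug-in: for the commodity $m$ placed inside the hat-bracket, Condition~4 being violated gives $\Delta U_{ij}^m(t)<-h_i$ (if $\Delta L_{ij}^m>0$) or $\Delta U_{ij}^m(t)<h_i$ (if $\Delta L_{ij}^m\le 0$), and then $\eta_{ij}(t)\le h_i/\delta_L$ and $\Delta L_{ij}^m\le l_{ij}$ give directly $\Delta U_{ij}^m(t)+\eta_{ij}(t)\Delta L_{ij}^m\le -h_i+(h_i/\delta_L)l_{ij}=[\,l_{ij}/\delta_L-1\,]h_i$ or $\le h_i$, from which $Z^*_{\eta_{ij}(t)}\le[\,l_{ij}/\delta_L-1\,]h_i\bar C_{ji}$ follows with $Z_{\eta_{ij}(t)}({\bf b}_{ij}(t))=0$. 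No level-by-level argument appears; the factor $l_{ij}/\delta_L-1$ comes from a single application of the two bounds, not from counting distinct $\Delta L$-levels. Until the telescoping step is actually written out or replaced by this direct bound, the proposal does not constitute a complete proof.
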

\begin{proof}
	The proof is shown in Appendix \ref{appendix_lemma_proof}.
\end{proof}

%Based on Lemma \ref{lemma_sdado_routing}, we further show that SDADO is throughput optimal:
%Based on Lemma \ref{lemma_sdado_routing}, 

Let $ {\bm \chi}$ represent a vector with the same length as ${\bf U}(t)$ having the $\lambda_c^\phi z^{(\phi,K_\phi)}$ as the $(c,(\phi,K_\phi,c))$th elements and $0$ elsewhere; let $\Lambda$ represent the \emph{computing network capacity region} \cite{bib_DCNC} defined as the closure of all ${\bm \chi}$ that can be stabilized by the computing network. Then based on Lemma \ref{lemma_sdado_routing}, we can further show that SDADO is throughput optimal:
\begin{thm}
	\label{thm_stability}
	With Assumption \ref{assume_L_quantitization} and \ref{assume_integer_multiple}, for any ${\bm \chi}$ in the interior of $\Lambda$, i.e., $\exists \epsilon > 0$ such that ${(\bm \chi + \epsilon \bf 1)}\in \Lambda$, SDADO in Algorithm \ref{alg_data_producing}-\ref{alg_FwdRateAsgmt} stabilizes the network, i.e.,
	\begin{equation}
		\label{eq_stability}
		{\limsup\limits_{t\rightarrow \infty} \frac{1}{t}\sum\nolimits_{\tau, m, i} U_i^m(\tau) \le \frac{B_0 + B_1}{\epsilon},}\ \ \text{with prob. 1},
	\end{equation}
	where $B_0$ and $B_1$ are constants depending on the network parameters $\bar C_{ij}$, $C_i^{\emph{pr}}$, $C_i^{\emph{dp}}$, $z^{m}$, $r^{m}$, $\delta_L$, $l_{ij}$.
%	where $B_0\triangleq \frac{1}{2}|\mathcal V|\left[(C_{\emph{max}}^{\emph{tr}+} + C_{\emph{max}}^{\emph{pr}}+C_{\emph{max}}^{\emph{dp}})^2 + (C_{\emph{max}}^{\emph{tr}-}+ C_{\emph{max}}^{\emph{pr}})^2\right]$;
%	$B_1\triangleq 2|\mathcal V|C_{\emph{max}}^{\emph{tr}+}\left[C_{\emph{max}}^{\emph{tr}+} + \frac{2}{\delta_L}L_{\emph{max}}(C_{\emph{max}}^{\emph{tr}+}+C_{\emph{max}}^{\emph{tr}-})\right] $.
\end{thm}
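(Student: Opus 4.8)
The plan is to establish Theorem~\ref{thm_stability} via the standard Lyapunov-drift argument for backpressure-type policies (see \cite{bib_Neely_book}\cite{bib_DCNC}), but with the twist that SDADO does not exactly maximize the backpressure weight --- it maximizes a \emph{bias-augmented} objective only approximately, as quantified by Lemma~\ref{lemma_sdado_routing}. Concretely, I would take the quadratic Lyapunov function $\mathcal L({\bf U}(t)) \triangleq \tfrac{1}{2}\sum_{m,i}\bigl(U_i^m(t)\bigr)^2$ and compute the one-slot conditional drift $\Delta(t)\triangleq \mathbb E\{\mathcal L({\bf U}(t+1)) - \mathcal L({\bf U}(t)) \mid {\bf U}(t)\}$. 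Squaring the queue dynamics \eqref{eq_queue_dynamic} (rewritten in the virtual-backlog variable $U_i^m = z^{(\phi,k)}Q_i^{(\phi,k,c)}$) and using $([x]^+)^2 \le x^2$, the cross terms collect into (i) a constant $B_0$ absorbing all the squared arrival/service terms, bounded via $A_{\max}$, the link capacities $\bar C_{ij}$, and the processing/producing capacities; and (ii) a weighted-sum term of the form $-\sum_{(i,j),m}\Delta U_{ij}^m(t)\,z^m b_{ij}^m(t) - \sum_i(\text{processing/producing weight terms})$, i.e. the aggregate backpressure ``reward'' that the policy is trying to maximize.

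Next I would introduce the bias. Since SDADO's forwarding decisions maximize $\sum_m [\Delta U_{ij}^m(t)+\eta_{ij}(t)\Delta L_{ij}^m]z^m b_{ij}^m(t)$ rather than $\sum_m \Delta U_{ij}^m(t)z^m b_{ij}^m(t)$, I add and subtract $\eta_{ij}(t)\Delta L_{ij}^m z^m b_{ij}^m(t)$ inside the drift. The added-back bias term is bounded in absolute value by a constant because $0\le\eta_{ij}(t)\le h_i/\delta_L$, $|\Delta L_{ij}^m|$ is bounded by $\max_m l_{ij}$ (finite abstract link lengths), $z^m$ is bounded, and $\sum_m b_{ij}^m(t)z^m\le \bar C_{ji}$; I fold this into a second constant $B_1$ (together with the slack $[l_{ij}/\delta_L - 1]h_i\bar C_{ji}$ coming from Lemma~\ref{lemma_sdado_routing}). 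After this bookkeeping the drift reads
\begin{align}
\Delta(t) \le B_0 + B_1 - \sum\nolimits_{(i,j),m}\bigl[\Delta U_{ij}^m(t)+\eta_{ij}(t)\Delta L_{ij}^m\bigr]z^m {b_{ij}^m}^{\mathrm{SDADO}}(t) - (\text{proc./prod. terms}) + \sum\nolimits_{m,i}U_i^m(t)\,\lambda\text{-terms}. \notag
\end{align}
Now Lemma~\ref{lemma_sdado_routing} lets me replace SDADO's forwarding allocation by the \emph{optimal} solution ${\bf b}_{ij}^*(t)$ of \eqref{eq_sdado-routing_opt} at the cost of the additive slack already absorbed into $B_1$; and because ${\bf b}_{ij}^*(t)$ maximizes the augmented objective over \emph{all} feasible integer allocations, I may in particular lower-bound the reward by plugging in the allocation induced by any flow that $\epsilon$-dominates ${\bm\chi}$. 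The same max-weight-matching optimality of Algorithms~\ref{alg_data_producing} and \ref{alg_proc_commitment} handles the processing and data-producing terms (here Assumption~\ref{assume_integer_multiple} is what guarantees the chosen commodity can consume the \emph{full} capacity with no rounding loss).

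For the final comparison I would invoke the characterization of the capacity region $\Lambda$: since ${\bm\chi}+\epsilon{\bf 1}\in\Lambda$, there exists a stationary randomized flow decomposition---feasible for constraints \eqref{eq_comm_capacity}--\eqref{eq_produce_capacity}---that routes/processes every commodity at rate at least $\lambda_c^\phi+\epsilon$ on each function stage. Substituting this flow's per-slot expected allocations as a feasible (suboptimal) choice into the maximized reward, the $\Delta L$ bias contributions telescope/cancel against themselves in expectation (the bias enters both SDADO's choice and the comparison point symmetrically, and the net bias discrepancy is already in $B_1$), and the standard computation collapses the queue-weighted terms to $-\epsilon\sum_{m,i}U_i^m(t)$. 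This yields $\Delta(t)\le B_0+B_1-\epsilon\sum_{m,i}U_i^m(t)$, and a telescoping sum over $\tau=0,\dots,t-1$ together with $\mathcal L\ge 0$ gives $\tfrac1t\sum_{\tau,m,i}\mathbb E\{U_i^m(\tau)\}\le (B_0+B_1)/\epsilon$; the almost-sure statement \eqref{eq_stability} then follows from the strong law / Lyapunov-drift a.s. stability theorem in \cite{bib_Neely_book}, and rate stability \eqref{eq_stability1} is a corollary.

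The main obstacle I anticipate is controlling the bias term cleanly: unlike in plain backpressure, the quantity being maximized by SDADO is not the Lyapunov drift itself, so one must show that the discrepancy between ``maximizing the $\Delta U + \eta\Delta L$ objective'' and ``maximizing $\Delta U$ alone'' is \emph{uniformly} bounded and does not scale with the queue sizes --- this is exactly where the bounded range $\eta_{ij}(t)\in[0,h_i/\delta_L]$ (Lemma~\ref{lemma_sdado_routing}) and the finiteness of the abstract lengths $\Delta L_{ij}^m$ are essential, and getting the constants $B_0,B_1$ to come out in terms of precisely $\bar C_{ij}, C_i^{\mathrm{pr}}, C_i^{\mathrm{dp}}, z^m, r^m, \delta_L, l_{ij}$ requires care. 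A secondary subtlety is making sure the $\eta_{ij}(t)$ from Lemma~\ref{lemma_sdado_routing} (which is allowed to depend on $t$) is handled correctly inside the conditional expectation --- but since it is deterministically bounded it causes no real difficulty.
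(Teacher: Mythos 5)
Your proposal matches the paper's proof essentially step for step: quadratic Lyapunov drift, squared terms absorbed into $B_0$, linear terms split into producing/processing/forwarding pieces, the bias term $Y_L(t)=-\sum_{(i,j),m}\eta_{ij}(t)\,\Delta L_{ij}^m\,z^m b_{ij}^m(t)$ added and later bounded using $0\le\eta_{ij}(t)\le h_i/\delta_L$ and $|\Delta L_{ij}^m|\le l_{ij}$, Lemma~\ref{lemma_sdado_routing} supplying the SDADO-vs.-optimal slack $[\,l_{ij}/\delta_L-1]h_i\bar C_{ji}$, max-weight-matching exactness (via Assumption~\ref{assume_integer_multiple}) for Algorithms~\ref{alg_data_producing} and \ref{alg_proc_commitment}, and comparison with the stationary randomized policy associated with ${\bm\chi}+\epsilon{\bf 1}\in\Lambda$, arriving at $\Delta({\bf U}(t))\le B_0+B_1-\epsilon\sum_{m,i}U_i^m(t)$. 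The one place you are slightly loose is the last step: telescoping yields a bound on $\frac1t\sum_\tau\mathbb E\{U_i^m(\tau)\}$, whereas the theorem claims a with-probability-1 bound on the pathwise time average; the paper bridges this by verifying a bounded fourth-moment condition on the one-slot increments (their eq.~\eqref{eq_4th_moment_bound}, which holds since all rates are uniformly bounded) and invoking the almost-sure drift-stability theorem of \cite{bib_Neely_prob_1}. You gesture at this with ``strong law / Lyapunov-drift a.s. stability theorem,'' which is the right idea, but the moment-condition check should be made explicit for the argument to close.
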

\begin{proof}
	The proof is shown in Appendix \ref{appendix_thm_proof}.
\end{proof}

In fact, Assumption \ref{assume_integer_multiple} can be removed without affecting the throughput-optimality of SDADO, but we need to make modifications on Algorithm \ref{alg_data_producing}, \ref{alg_proc_commitment}, and \ref{alg_FwdRateAsgmt}: i) 
%for SDADO-data-producing and -processing-commitment, 
the max-weight-matching in line \ref{line_dp_mwm_begin}-\ref{line_dp_mwm_end} of Algorithm \ref{alg_data_producing} and line \ref{line_pr_mwm_begin}-\ref{line_pr_mwm_end} of Algorithm \ref{alg_proc_commitment} can be replaced by knapsack; ii) 
%for SDADO-interest-forwarding, 
instead of implementing line \ref{line_bij} in Algorithm \ref{alg_FwdRateAsgmt}, we calculate $\eta_{ij'}(t)=\min\{B_{ij',\tilde m'}^{\text{lb}}+1, \frac{1}{2}[B_{ij',\tilde m'}^{\text{lb}} + B_{ij',\tilde m'}^{\text{ub}}]\}$ and plug it into \eqref{eq_sdado_routing_obj}, and then solve \eqref{eq_sdado-routing_opt} by knapsack. Thus, we further have the following corollary.

%\begin{itemize}
%	\item For SDADO-data-producing and SDADO-processing-commitment, the max-weight-matching in line \ref{line_dp_mwm_begin}-\ref{line_dp_mwm_end} of Algorithm \ref{alg_data_producing} and line \ref{line_pr_mwm_begin}-\ref{line_pr_mwm_end} of Algorithm \ref{alg_proc_commitment} can be replaced by knapsack packing. 
%	\item For SDADO-routing, once $B_{ij',\tilde m'}^{\text{ub}} > B_{ij',\tilde m'}^{\text{lb}}$ is verified in line \ref{line_rate_assign_begin2} of Algorithm \ref{alg_FwdRateAsgmt}, instead of further implementing line \ref{line_bij}-\ref{line_asgmtInd}, we calculate $\eta_{ij'}(t)=\min\{B_{ij',\tilde m'}^{\text{lb}}+1, \frac{1}{2}[B_{ij',\tilde m'}^{\text{lb}} + B_{ij',\tilde m'}^{\text{ub}}]\}$ and plug it into Eq. \eqref{eq_sdado_routing_obj}, and then solve Eq. \eqref{eq_sdado-routing_opt}, which is a knapsack problem.  
%\end{itemize}
\begin{coro}
	With Assumption \ref{assume_L_quantitization}, for any $ {\bm \chi}$ in the interior of $\Lambda$, the SDADO algorithm with knapsacks stabilizes the network.
\end{coro}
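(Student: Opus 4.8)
The plan is to obtain the corollary as a quantitative strengthening of the Lemma~\ref{lemma_sdado_routing} / Theorem~\ref{thm_stability} machinery, where the loss term $[\nicefrac{l_{ij}}{\delta_L}-1]h_i\bar C_{ji}$ in the Lemma is driven down to $0$ once exact optimization (knapsack) is allowed. First I would re-examine the proof of Lemma~\ref{lemma_sdado_routing}: the coefficient $\eta_{ij}(t)$ there is constructed so that the biased objective $\sum_m[\Delta U_{ij}^m(t)+\eta_{ij}(t)\Delta L_{ij}^m]z^m b_{ij}^m(t)$ has the SDADO choice as a (near-)maximizer; the quantization Assumption~\ref{assume_L_quantitization} guarantees $\eta_{ij}(t)$ can be picked in $[0,\nicefrac{h_i}{\delta_L}]$. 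Under the modified Algorithm~\ref{alg_FwdRateAsgmt} we no longer round $b_{ij'}^{\tilde m'}(t)$ to $\nicefrac{\bar C_{j'i}}{z^{\tilde m'}}$ but instead set $\eta_{ij'}(t)=\min\{B_{ij',\tilde m'}^{\text{lb}}+1,\tfrac12[B_{ij',\tilde m'}^{\text{lb}}+B_{ij',\tilde m'}^{\text{ub}}]\}$ and solve \eqref{eq_sdado-routing_opt} exactly by knapsack. So the first key step is to verify that this explicit $\eta_{ij'}(t)$ still lies in the admissible range $[0,\nicefrac{h_i}{\delta_L}]$ (using the definitions of $B^{\text{lb}}$, $B^{\text{ub}}$ from lines~\ref{line_B_hat}--\ref{line_bounds_end} and the four scheduling conditions of line~\ref{line_4conditions}), so that the per-link bias is still a legitimate choice for the Lyapunov argument.

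The second key step is to redo the drift computation of Theorem~\ref{thm_stability} with the knapsack solutions in place of the max-weight-matching / rounded assignments. Here I would reuse the same quadratic Lyapunov function $V({\bf U}(t))=\sum_{m,i}(U_i^m(t))^2$ (or its natural analogue in the paper's notation) and the same decomposition of its one-slot drift into a communication term, a processing-commitment term, and a data-producing term plus a bounded constant. Since knapsack returns an \emph{exact} optimizer of the separable linear objective under the capacity constraints \eqref{eq_comm_capacity}--\eqref{eq_produce_capacity}, each of these terms is now at least as large as it is under \emph{any} feasible action — in particular at least as large as under the randomized stationary policy that exists for ${\bm\chi}+\epsilon{\bf 1}\in\Lambda$ (the policy supplied by the capacity-region characterization of \cite{bib_DCNC}). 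Crucially, the biasing terms $\eta_{ij}(t)\Delta L_{ij}^m$ telescope/cancel when summed over the network exactly as in the proof of Theorem~\ref{thm_stability}, because $\eta_{ij}(t)$ is link-local and bounded by $\nicefrac{h_i}{\delta_L}$; the only change is that the explicit loss term $[\nicefrac{l_{ij}}{\delta_L}-1]h_i\bar C_{ji}$ from Lemma~\ref{lemma_sdado_routing} is replaced by $0$, which only improves the constant. This yields the negative-drift inequality $\mathbb{E}[V({\bf U}(t+1))-V({\bf U}(t))\mid{\bf U}(t)]\le B - 2\epsilon\sum_{m,i}U_i^m(t)$ for a finite constant $B$, and then the Lyapunov-drift lemma of \cite{bib_Neely_book} gives \eqref{eq_stability} (hence rate stability \eqref{eq_stability1}) with probability one.

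I expect the main obstacle to be the bookkeeping in step one: showing that the explicitly constructed $\eta_{ij'}(t)=\min\{B_{ij',\tilde m'}^{\text{lb}}+1,\tfrac12[B_{ij',\tilde m'}^{\text{lb}}+B_{ij',\tilde m'}^{\text{ub}}]\}$ is simultaneously (a) well-defined and nonnegative, (b) at most $\nicefrac{h_i}{\delta_L}$, and (c) such that the resulting biased objective is maximized (not just approximately) by the knapsack solution for \emph{every} commodity, including the commodities in $\mathcal M^{=}_{ij',\tilde m'}$ that were handled by the equality/tie clauses in line~\ref{line_4conditions}. The bounds $B^{\text{lb}},B^{\text{ub}}$ are $\max/\min$ of the ratios $\omega_{ij'}^{m,\tilde m'}(t)=\frac{\Delta U_{ij'}^m(t)-\Delta U_{ij'}^{\tilde m'}(t)}{\Delta L_{ij'}^{\tilde m'}-\Delta L_{ij'}^m}$, so I would bound numerators by $2\Delta U_{\max}\le h_i$ (the backlog differentials are controlled by the link capacities) and denominators from below by $\delta_L$ via Assumption~\ref{assume_L_quantitization}, giving $|\omega_{ij'}^{m,\tilde m'}(t)|\le \nicefrac{h_i}{\delta_L}$ and hence the required range after adding the ``$+1$'' slack. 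Once these inequalities are in hand, the rest is a transcription of the already-proved arguments with the error term zeroed out, so no genuinely new idea is needed beyond replacing ``near-optimal rounded assignment'' with ``exact knapsack optimizer.''
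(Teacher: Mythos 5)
Your strategy matches the paper's (implicit) one: the corollary is obtained by reusing the Lyapunov‑drift machinery of Theorem~\ref{thm_stability}, with the knapsack solutions replacing the rounded max‑weight assignments so that the loss term $[\nicefrac{l_{ij}}{\delta_L}-1]h_i\bar C_{ji}$ of Lemma~\ref{lemma_sdado_routing} vanishes, and with the explicit $\eta_{ij'}(t)$ verified to lie in a bounded range so that $Y_L$ and $Y_L^\star$ remain bounded. Two small corrections to the bookkeeping: the modified rule uses $\kappa=1$ (the ``$+1$'' slack), so the admissible bound is $\eta_{ij'}(t)\le\nicefrac{h_i}{\delta_L}+1$ rather than $\nicefrac{h_i}{\delta_L}$ (this only enlarges $B_L$, not breaking the argument), and the bias terms $\eta_{ij}(t)\Delta L_{ij}^m$ do \emph{not} telescope or cancel over the network in the Theorem~\ref{thm_stability} proof — $-Y_L(t)$ and $Y_L^\star(t)$ are each separately upper bounded by the constant $B_L$ (inequalities \eqref{eq_Y_L_bound1}--\eqref{eq_Y_L_bound2}), which is all that is needed.
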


%\textbf{Remark}: In practice, if $\bar C_{ji}$, $C_i^{\text{dp}}$ (or $C_i^{\text{pr}}$) are much larger than the granularity of $z^m$ (or $r^m$), we can replace the knapsack packing by \emph{greedy} packing (see Ref. \cite{bib_Algorithm}) with significantly reduced complexity, though compromising throughput.

\section{Numerical Results}
\label{sec_simulation}
%\begin{figure}
%	\centering
%	\includegraphics[width=3.3in]{fog_network.pdf}
%	\caption{Computing network with Fog topology}
%	\label{fig_fog_network}
%	\vspace{-0.5cm}
%\end{figure}
In this section, we evaluate performance of the SDADO scheme and compare it with state-of-art: the DCNC and EDCNC schemes \cite{bib_DCNC} and the Best Route scheme \cite{bib_nfd_guider} extended to NDN-based computing networks. Fig. \ref{fig_fog_network} shows the simulated network with Fog topology. The network and service settings are listed in Table \ref{tab_fog_net}.
\begin{table}[ht]
	\begin{center}
		\caption{Network and Service Settings}
		\label{tab_fog_net}
%		\begin{tabular}{c||c|c|c}
%			\toprule
%			Node Index & 1-4 & 5-9 & 10-19\\
%			\hline
%			$C_i^{\text{pr}}$ (cycles/sec.) & $2\times 10^7$ & $5\times 10^6$ & $0$\\
%			\hline
%			$C_i^{\text{dp}}$ (Mbps) & \multicolumn{3}{c}{Sufficiently large}\\
%%			\hline\hline
%%			Link Type & cloud-cloud & cloud-edge & edge-edge \\
%%			\hline
%%			Average Link Capacity (Mbps) & 800 & 640 & 640 \\
%%			$1\slash r^{(\phi,m)}$ (Mbps/CPU)\!\!&$500$&$800$&$340$&$300$\\
%			%\bottomrule
%			%\hline
%		\end{tabular}
		\begin{tabular}{c||c}
			\toprule
			 \!$C_i^{\text{pr}}\!=\!2\!\times\! 10^7 \text{cycles/sec.}, 1\!\le\!i\!\le\!4$; & $ C_{ij}\!=\!800 \text{Mbps}, 1\!\le\!(i,j)\!\le\!5 $;\!\\
			 \!$C_i^{\text{pr}}\!=\!5\!\times\! 10^6 \text{cycles/sec.}, 5\!\le\!i\!\le\!9$. & $ C_{ij}\!=\!640 \text{Mbps}, 6\!\le\!(i,j)\!\le\!14 $.\!\\
			\hline
			\multicolumn{2}{c}{$C_i^{\text{dp}}$ is sufficiently large for $i=10,\cdots,13$ and $16,\cdots,19$.}\\
				%$i\in \mathcal P_{(1,0)}\cup \mathcal P_{(2,0)}$}\\
		\end{tabular}
%		\begin{tabular}{c||c|c|c}
%			%\toprule
%			\hline 
%			Wired Link & cloud-cloud & cloud-edge & edge-edge \\
%			\hline
%			$\bar C_{ij}$ (Mbps) & 800 & 640 & 640\\
%			%\hline
%			%\bottomrule
%		\end{tabular}
		\begin{tabular}{c||c|c}
			\hline \hline
%			Wireless Link Type & edge-mesh & mesh-mesh \\
%			\hline
%			Pathloss Coefficient & \multicolumn{2}{c}{2} \\
%			\hline
%			Large Scale Fading & \multicolumn{2}{c}{Log-normal}\\
%			\hline
			Wireless Link Index & 15-24 & 25-29\\
			\hline
			Small Scale Fading & Rician, Rician-factor$=10$dB. & Rayleigh. \\
			\hline
			Min. Avg. Rx-SNR:& \multicolumn{2}{c}{ $\bullet$ 5dB given Edge-Tx; $\bullet$ 0dB given Mesh-Tx.}\\
			\hline
			\multirow{2}{*}{Other Settings} & \multicolumn{2}{c}{$\bullet$ $10$MHz bandwidth; $\bullet$ Free space pathloss;}\\
				 & \multicolumn{2}{c}{$\bullet$ Log-normal shadowing.} \\
			%\bottomrule
		\end{tabular}
		\begin{tabular}{c||c|c}
			%\toprule
			\hline \hline
			Service Functions & $(1,0), (1,1), (1,2)$. & $(2,0), (2,1), (2,2)$. \\
			\hline
			$\mathcal C_\phi$ &  10, 12, 14. & 15, 17, 19. \\
			\hline
			$\mathcal P_{\phi,0}$ & 16-19 & 10-13 \\
			\hline
			\multirow{2}{*}{$z^{(\phi,k)}$ (KB)} & \!$z^{(1,0)}\!=\!z^{(1,1)}\!=\!50$;\! & \!$z^{(2,0)}\!=\!z^{(2,1)}\!=\!50$;\!\\
			& $z^{(1,2)}=100$. & $z^{(2,2)}=20$. \\
			\hline
			\multirow{2}{*}{$r^{(\phi,k)}$ (cycles/packet)} & $r^{(1,1)}=5\times10^3$; & $r^{(2,1)}=5\times10^3$; \\
			& $r^{(1,2)}=2\times10^4$. & $r^{(2,2)}=10^4$.\\
			%\hline
			\bottomrule
		\end{tabular}
	\end{center}
	\vspace{-0.3cm}
\end{table}

\begin{figure*}[ht]
	%\vspace{-3cm}
	\centering
	\hspace{-0.5cm}
	\subfigure[]{
		%\vspace{-5cm}
		%\hspace{-1cm}
		\centering \includegraphics[width=4.4cm,]{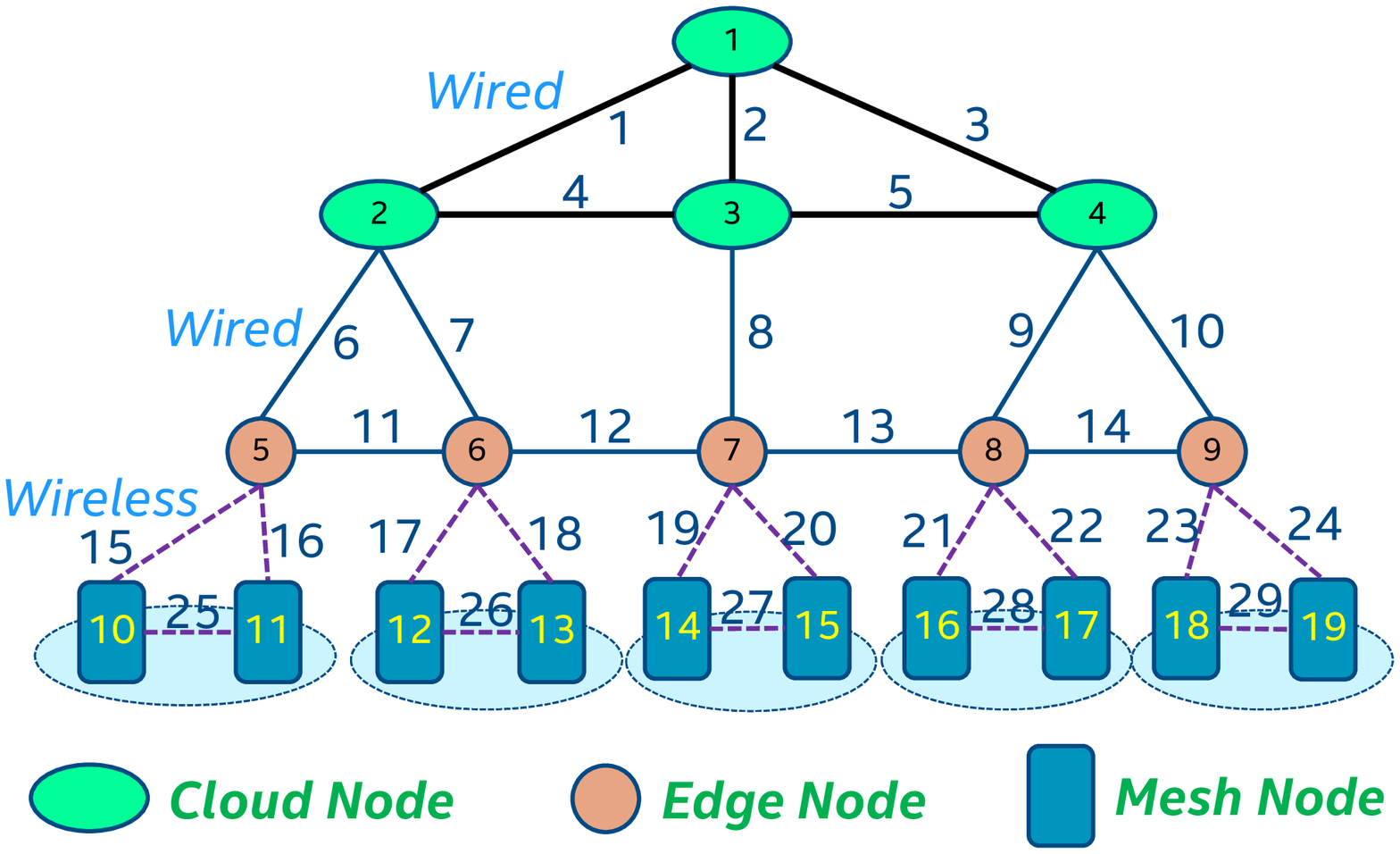}
		%		\centering \includegraphics[width=5.85cm,]{fog_network.pdf}
		%\vspace{-5cm}
		%\hspace{-3cm}
		\label{fig_fog_network}
	}
	\hspace{-0.5cm}
	\subfigure[]{
		%\vspace{-5cm}
		%\hspace{-1cm}
		\centering \includegraphics[width=4.9cm,]{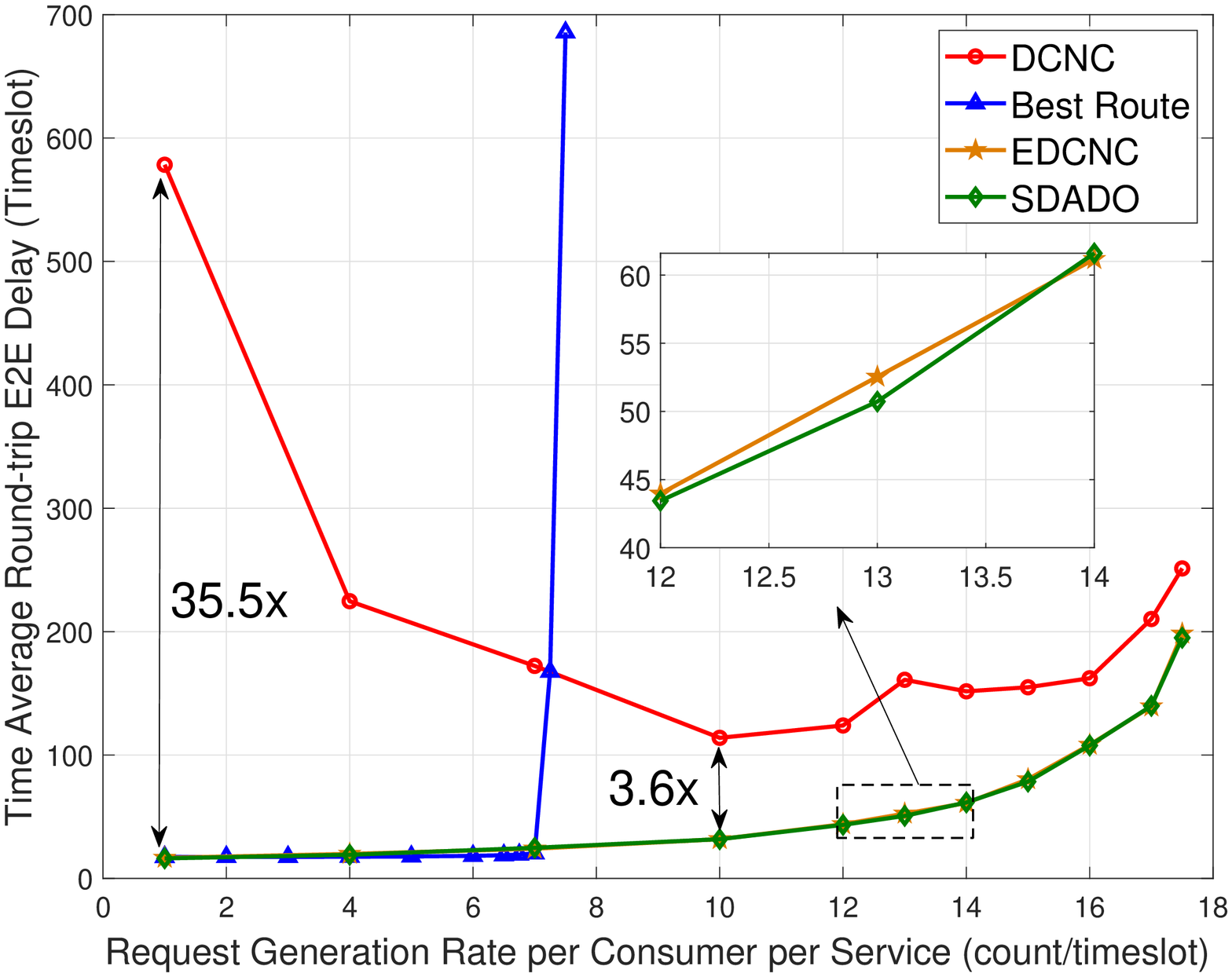}
		%		\centering \includegraphics[width=5.85cm,]{}
		%\vspace{-5cm}
		%\hspace{-3cm}
		\label{fig_delay}
	}
	\hspace{-0.9cm}
	\subfigure[]{
		%\vspace{-5cm}
		%\hspace{-1cm}
		\centering \includegraphics[width=4.9cm]{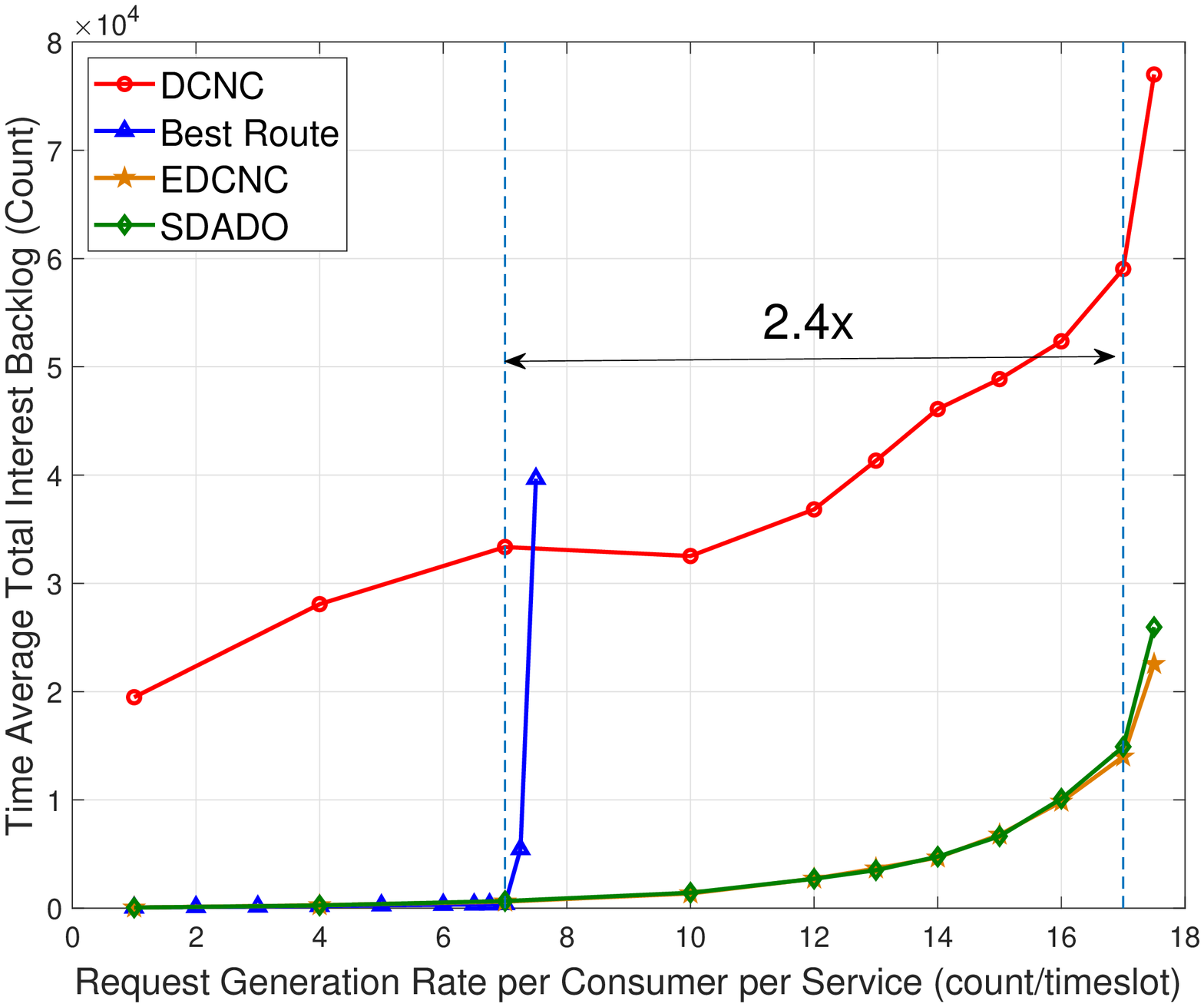}
		%		\centering \includegraphics[width=5.85cm]{Backlog_comparison_interest.eps}
		%\vspace{-5cm}
		%\hspace{-3cm}
		\label{fig_interest_backlog}
	}
	\hspace{-0.85cm}
	\subfigure[]{
		%\vspace{-5cm}
		%\hspace{-0.8cm}
		\centering \includegraphics[width=4.9cm]{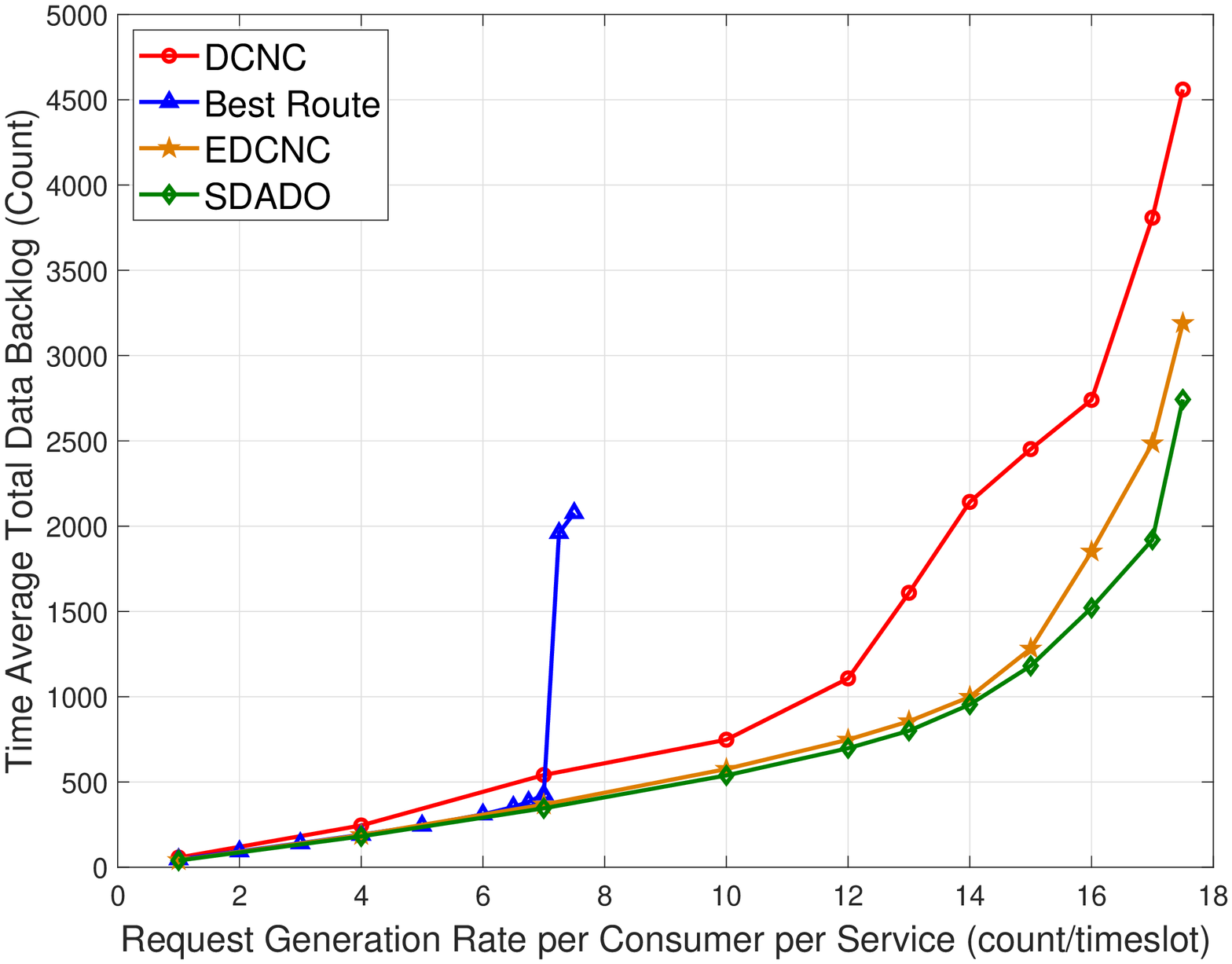}
		%		\centering \includegraphics[width=5.85cm]{Backlog_comparison_data.eps}
		%\vspace{-5cm}
		%\hspace{-3cm}
		\label{fig_data_backlog}
	}
	\hspace{-0.85cm}
	%\vspace{-0.1cm}
	\caption{Simulation results. a)~Simulation network with Fog topology; b)~round trip E2E delay; c)~time average total interest backlog; d)~time average total data backlog.} 
	\vspace{-0.5cm}
	\label{fig_simulation results}
\end{figure*}

Fig. \ref{fig_delay} shows the average round trip E2E delay performances of DCNC, Best Route, EDCNC, and SDADO. In low traffic scenario (request rate $\lambda_c^\phi \le 7.0$ count/timeslot), Best Route, EDCNC, and SDADO achieve much lower E2E delay than DCNC. This is because the former three algorithms enable topological information to dominate routing and guide the interest packets flowing along short paths, while DCNC has to explore paths by sending interest packets in different directions, which sacrifices delay performance. As $\lambda_c^\phi$ exceeds $7.0$ count/timeslot, the E2E delay of Best Route sharply increases to unbearable magnitude indicating that its throughput limit is exceeded. In contrast, DCNC, EDCNC, SDADO still support the high traffic due to their flexibility, while DCNC and EDCNC outperform DCNC by achieving lower E2E delay ($3.6\times$ -- $35.5\times$ gain). Although Fig. \ref{fig_delay} shows that SDADO and EDCNC have the similar E2E delay performance, SDADO is much easier to implement in distributed manner than EDCNC. To implement EDCNC, one has to heuristically find a global bias coefficient $\eta$ ($=\!10^8$ in the simulation) and distribute it among all the nodes. Moreover, given an arbitrary network, there is no clear clue of systematically finding a proper value of $\eta$, even it significantly influences the E2E delay. In contrast, SDADO does not use any fixed global bias coefficient but dynamically hybrid queuing-based flexibility and topology-based discipline in orchestration.   
%On the other hand, one interesting phenomenon of DCNC is that its E2E delay first decreases and then increases as the request traffic keeps increasing. This is because the increased traffic accelerates path exploration but cause congestion in the meanwhile, and the E2E delay results from the trade off between these two factors. 

Throughput performances are further demonstrated in Fig. \ref{fig_interest_backlog} by showing the time average interest backlog accumulations evolving with traffics. When $\lambda_c^\phi > 7.0$ count/timeslot, the average total interest backlog under Best Route exhibits a sharp increase, illustrating its throughput limit. In comparison, the sharp backlog increase under DCNC, EDCNC, and SDADO occurs at $\lambda_c^\phi=17.0$ count/timeslot exhibiting $2.4\times$ throughput gain. In addition, throughput limits of DCNC, EDCNC, and SDADO are the same because all of them are throughput optimal. In the meanwhile, backlog accumulation levels of SDADO and EDCNC are much lower than DCNC indicating significantly reduced E2E delay. The time average accumulated data backlogs are shown in Fig. \ref{fig_data_backlog}, which verifies the similar trend.
%The time average accumulated data backlogs in the network under these algorithms are shown in Fig. \ref{fig_data_backlog}. As expected, the data backlog under Best Route sharply increases at lower traffic level than the other three algorithms indicating smaller throughput limit, while the backlog level of SDADO is higher than DCNC indicating much smaller E2E delay. 

\section{Conclusions}
\label{sec_conclusion}
In this paper, we consider dynamic orchestration problem in NDN-based distributed computing networks to deliver next-generation services having function-chaining structures. We propose the SDADO algorithm that reduces E2E delay while achieving throughput-optimality by dynamically hybriding queuing-based flexibility and topology-based discipline, where the topological information is obtained via our proposed service discovery mechanism. Theoretical analysis and simulation results have confirmed our conclusions.
%Assisted by our developed function-chaining service discovery mechanism, our proposed SDADO algorithm makes dynamic orchestration decisions, involving routing, computing, and data producing, on NDN interest packets. 
%We first propose a service discovery mechanism to obtain topological information for each node. Then we design the SDADO algorithm that dynamic forwarding, computation commitment and data producing decisions on NDN-interests by hybriding queuing-based flexibility and topology-based discipline. 
%The proposed SDADO is throughput optimal while reducing E2E latency of the service delivery comparing with state-of-art, which have been confirmed by theoretical analysis and simulation results. The results presented in this paper can serve as a guideline for deploying next-generation services over NDN-based distributed computing networks.

\appendices
\section{Proof of Lemma \ref{lemma_sdado_routing}}
\label{appendix_lemma_proof}
With purpose of notational convenience, we first relist the conditions in line \ref{line_4conditions} of Algorithm \ref{alg_FwdRateAsgmt} for a targeted commodity $m$ over link $(i,j)$:
%On targeted commodity $\tilde m'$ over link $(i,j')$, we first relist the conditions in line \ref{line_4conditions} of Algorithm \ref{alg_FwdRateAsgmt} for convenience:
\begin{enumerate}
	\item $\nexists m'\in \mathcal M_{ij,m}^{>}$ satisfies $\Delta U_{ij}^{m'}(t) \ge \Delta U_{ij}^{m}(t)$;\label{cond_1}
	\item $\nexists m'\in \mathcal M_{ij,m}^{=}$ satisfies $\Delta U_{ij}^{m'}(t) > \Delta U_{ij}^{m}(t)$;\label{cond_2}
	\item $\nexists m'\in \mathcal M_{ij,m}^{<}$ satisfies $\Delta U_{ij}^{m'}(t) > \Delta U_{ij}^{m}(t) + h_i$;\label{cond_3}
	\item $\mathbbm{1}(\Delta L_{ij}^{m}\!>\! 0)\mathbbm{1}(\Delta U_{ij}^{m}(t) \!\ge\! -h_i) \!+\! \mathbbm{1}(\Delta L_{ij}^{m}\!\le\! 0)\times\mathbbm{1}(\Delta U_{ij}^{m}(t)\ge h_i)>0$.\label{cond_4}
\end{enumerate}
For commodity $m$ and link $(i,j)$, we further define
\begin{align}
	&\Scale[1]{\hat B_{ij,m}^{\text{lb}}(t) = \left[\max\nolimits_{m'\in \mathcal M_{ij,{m}}^{<}}\left\{\omega_{ij}^{m',m}(t)\right\}\right]^{+},}\label{eq_hat_B_lb}\\
	&\Scale[1]{\hat B_{ij,m}^{\text{ub}}(t) = \max\nolimits_{m'\in \mathcal M_{ij,{m}}^{>}}\left\{\omega_{ij}^{m',m}(t)\right\},}\label{eq_hat_B_ub}
\end{align}
where $\omega_{ij}^{m',m}(t)$ is defined in line \ref{line_omega} of Algorithm \ref{alg_FwdRateAsgmt}. With the definitions of $B_{ij,m}^{\text{lb}}(t)$ and $B_{ij,m}^{\text{ub}}(t)$ in line \ref{line_B_hat}-\ref{line_bounds_end} of Algorithm \ref{alg_FwdRateAsgmt}, it follows that
\begin{align}
	\Scale[1]{B_{ij,m}^{\text{lb}}(t) \!=\!
	\begin{cases}
		\max\left\{\! \hat B_{ij,m}^{\text{lb}}(t),\frac{-\Delta U_{ij}^{m}(t)}{\Delta L_{ij}^{m}}\!\right\}, & \text{if $\Delta L_{ij}^{m}>0$},\\
		\hat B_{ij,m}^{\text{lb}}(t), & \text{otherwise},
	\end{cases}}\label{eq_B_ij_m}\\
	\Scale[1]{B_{ij,m}^{\text{ub}}(t) \!=\!
		\begin{cases}
			\min\left\{\! \hat B_{ij,m}^{\text{ub}}(t),\frac{-\Delta U_{ij}^{m}(t)}{\Delta L_{ij}^{m}}\!\right\}, & \text{if $\Delta L_{ij}^{m}<0$},\\
			\hat B_{ij,m}^{\text{ub}}(t), & \text{otherwise},
	\end{cases}}
\end{align}

%To prepare setting the value of $\eta_{ij}(t)$, we further list Condition \ref{cond_i}) and \ref{cond_ii}) as follows:
With Assumptions \ref{assume_L_quantitization} and \ref{assume_integer_multiple}, when running SDADO-interest-forwarding according to Algorithm \ref{alg_fowarding}-\ref{alg_FwdRateAsgmt} over link $(i,j)$ in timeslot $t$, one of the following two cases should be satisfied:
\begin{enumerate}[i)]
	\item a commodity $m$ is assigned to be forwarded over link $(i,j)$ using full capacity, i.e., $b_{ij}^{m}(t)= \nicefrac{\bar C_{ji}}{z^{m}}$, and commodity $m$ satisfies Condition \ref{cond_1})-\ref{cond_4}) and $B_{ij,m}^{\text{lb}} \le B_{ij,m}^{\text{ub}}$; \label{cond_i}
	\item otherwise, no commodity is assigned to be forwarded over link $(i,j)$, i.e., $b_{ij}^{m'}(t)=0$ for $\forall m'$, but there exists a commodity $m$ satisfying Condition \ref{cond_1})-\ref{cond_3}) and $\hat B_{ij,m}^{\text{lb}} \le \hat B_{ij,m}^{\text{ub}}$. \label{cond_ii}
\end{enumerate}

We then set the value of $\eta_{ij}(t)$ as follows.
\begin{itemize}
	\item If Condition \ref{cond_i}) is satisfied, we set
	\begin{equation}
		\Scale[1]{\eta_{ij}(t)=\min\left\{B_{ij,m}^{\text{lb}}+\kappa, \frac{1}{2}\left[B_{ij,m}^{\text{lb}} + B_{ij,m}^{\text{ub}}\right]\right\},}\label{eq_h_1}
		%\Scale[1]{\eta_{ij}(t)=B_{ij,m}^{\text{lb}};}\label{eq_h_1}
	\end{equation}
	where $\kappa$ is a constant set to be any non-negative value;
	\item if Condition \ref{cond_ii}) is satisfied, we set
	\begin{equation}
		\Scale[1]{\eta_{ij}(t)=\min\left\{\hat B_{ij,m}^{\text{lb}}+\kappa, \frac{1}{2}\left[\hat B_{ij,m}^{\text{lb}} + \hat B_{ij,m}^{\text{ub}}\right]\right\}.}\label{eq_h_2}
		%\Scale[1]{\eta_{ij}(t)=\hat B_{ij,m}^{\text{lb}}.}\label{eq_h_2}
	\end{equation}
%	\item otherwise, we set
%	\begin{equation}
%		\eta_{ij}(t)=0.\label{eq_h_3}
%	\end{equation}
\end{itemize}

For commodity $m$ satisfying Condition \ref{cond_3}), we have
\begin{align}
	&\Scale[1]{\omega_{ij}^{m', m}(t)= \frac{\Delta U_{ij}^{m'}(t)-\Delta U_{ij}^{m}(t)}{\Delta L_{ij}^{m}-\Delta L_{ij}^{m'}}\le\frac{h_i}{\Delta L_{ij}^{m}-\Delta L_{ij}^{m'}}
	\le \frac{h_i}{\delta_L},}\notag\\
	&\ \ \ \ \ \ \ \ \ \ \ \ \ \ \ \ \ \ \ \ \ \ \ \ \ \ \ \ \ \ \ \ \ \ \ \ \ \ \ \ \ \ \ \ \ \ \ \Scale[1]{\forall m'\in \mathcal M_{ij,m}^{<};}
\end{align}
which can be plugged into \eqref{eq_hat_B_lb}, and it follows that
\begin{equation}
	\Scale[1]{\hat B_{ij,m}^{\text{lb}}(t)  \le \frac{h_i}{\delta_L}}.\label{eq_bounding_1}\textbf{}
\end{equation}
If commodity $m$ also satisfies Condition \ref{cond_4}), we have
\begin{equation}
	\Scale[1]{\frac{-\Delta U_{ij}^m(t)}{\Delta L_{ij}^m} \le \frac{-\Delta U_{ij}^m(t)}{\delta_L} \le \frac{h_i}{\delta_L}},\ \text{if $\Delta L_{ij}^m>0$},\label{eq_bounding_2}
\end{equation}
and by plugging \eqref{eq_bounding_1} and \eqref{eq_bounding_2} into \eqref{eq_B_ij_m}, we further obtain
\begin{equation}
	\Scale[1]{B_{ij,m}^{\text{lb}}(t) \le \frac{h_i}{\delta_L}.}\label{eq_bounding_3}
\end{equation}
After plugging \eqref{eq_bounding_1} and \eqref{eq_bounding_3} respectively into \eqref{eq_h_2} and \eqref{eq_h_1}, we show that $\eta_{ij}(t)$ is upper bounded:
\begin{equation}
	\Scale[1]{h_{ij}(t) \le \frac{h_i}{\delta_L}+\kappa.}\label{eq_h_ij_bound}
\end{equation}

%For problem \eqref{eq_sdado-routing_opt} given $\eta_{ij}(t)$'s value chosen according to \eqref{eq_h_1}-\eqref{eq_h_3}, 
%
%Denote $W_{ij}^m(x)\triangleq \Delta U_{ij}^m(t) + x\Delta L_{ij}^m$ as a linear function of $x$ over $[0,+\infty)$.

Define $W_{ij}(x)\triangleq \max\nolimits_{m'}\{\Delta U_{ij}^{m'}(t) + x\Delta L_{ij}^{m'}\}$, which is a piece-wise linear function of $x$, over $\mathbb{R}^+$. Based on the definitions in \eqref{eq_hat_B_lb}-\eqref{eq_hat_B_ub}, for commodity $m$ satisfying $\hat B_{ij,m}^{\text{lb}}(t) \le \hat B_{ij,m}^{\text{ub}}(t)$, we have 
\begin{equation}
	\Scale[1]{W_{ij}(x) = \Delta U_{ij}^{m}(t) + x\Delta L_{ij}^{m},\ \ \ \forall x \in [\hat B_{ij,m}^{\text{lb}}(t), \hat B_{ij,m}^{\text{ub}}(t)]}.\label{eq_W_x}
\end{equation}
%On the other hand, going back to problem \eqref{eq_sdado-routing_opt}, we have 
%Then it follows from \eqref{eq_W_x} that%, for $x \in [\hat B_{ij,m}^{\text{lb}}(t), \hat B_{ij,m}^{\text{ub}}(t)]$, 
Based on \eqref{eq_W_x}, we upper bound function $Z_x({\bf b}_{ij}(t))$ in problem \eqref{eq_sdado-routing_opt} for $x \in [\hat B_{ij,m}^{\text{lb}}(t), \hat B_{ij,m}^{\text{ub}}(t)]$,
\begin{align}
	\Scale[1]{Z_x({\bf b}_{ij}(t))}&\Scale[1]{=\sum\nolimits_{m'}{\left[\Delta U_{ij}^{m'}(t)+x\Delta L_{ij}^{m'}\right]z^{m'} b_{ij}^{m'}(t)},}\notag\\
	&\Scale[1]{\le \left[\Delta U_{ij}^{m}(t) + x\Delta L_{ij}^{m}\right] \sum\nolimits_{m'}{z^{m'} b_{ij}^{m'}(t)}}\notag\\ 
	%&\Scale[1]{= W_{ij}(x)\sum\nolimits_{m'}{z^{m'} b_{ij}^{m'}(t)}\le \left[W_{ij}(x)\right]^+ \bar C_{ji}},\label{eq_bound_Z}
	&\Scale[1]{= W_{ij}(x)\sum\nolimits_{m'}{z^{m'} b_{ij}^{m'}(t)}},\notag\\
	&\Scale[1]{\le \left[W_{ij}(x)\right]^+ \bar C_{ji},}\label{eq_bound_Z}
\end{align}
where the upper bound is achieved if $W_{ij}(x)\ge 0$, $b_{ij}^m(t)=\nicefrac{\bar C_{ji}}{z^m}$, and $b_{ij}^{m'}(t)=0, \forall m'\neq m$. Denote $Z_x^*\triangleq\max_{{\bf b}_{ij}(t)}{Z_x({\bf b}_{ij}(t))}$ subject to \eqref{eq_sdado_routing_cap_constraint}, and it follows from \eqref{eq_bound_Z} that $Z_x^*=\left[W_{ij}(x)\right]^+ \bar C_{ji}$. Moreover, if $x=\eta_{ij}(t)$, we have $Z_{\eta_{ij}(t)}^*=Z_{\eta_{ij}(t)}({\bf b}_{ij}^*(t))$.

With Assumptions \ref{assume_L_quantitization} and \ref{assume_integer_multiple}, we run SDADO-interest-forwarding according to Algorithm \ref{alg_fowarding}-\ref{alg_FwdRateAsgmt} and obtain the following conclusions.
\begin{itemize}
	\item When Condition \ref{cond_i}) is satisfied, we have $W_{ij}(x) > 0$; $b_{ij}^m(t)=\nicefrac{\bar C_{ji}}{z^m}$; $b_{ij}^{m'}(t)=0, \forall m'\neq m$; $\eta_{ij}(t)\in [ B_{ij,m}^{\text{lb}}(t);  B_{ij,m}^{\text{ub}}(t)] \subseteq [\hat B_{ij,m}^{\text{lb}}(t), \hat B_{ij,m}^{\text{ub}}(t)]$. Then it follows from \eqref{eq_bound_Z} that
	\begin{equation}
		\Scale[1]{Z_{\eta_{ij}(t)}({\bf b}_{ij}(t))=Z_{\eta_{ij}(t)}^*.}\label{eq_Z_vs_Z*1}
	\end{equation}
	%where ${\bf b}^{*}_{ij}(t)$ is the optimal solution that maximizes \eqref{eq_sdado_routing_obj} subject to \eqref{eq_sdado_routing_cap_constraint}.
	
	\item When Condition \ref{cond_ii}) is satisfied, we have $b_{ij}^{m'}(t)=0, \forall m'$; $\eta_{ij}(t)\in [\hat B_{ij,m}^{\text{lb}}(t), \hat B_{ij,m}^{\text{ub}}(t)]$; $\mathbbm{1}(\Delta L_{ij}^{m}> 0)\mathbbm{1}(\Delta U_{ij}^{m}(t) \ge -h_i) + \mathbbm{1}(\Delta L_{ij}^{m}\le 0)\mathbbm{1}(\Delta U_{ij}^{m}(t)\ge h_i)=0$. For commodity $m$, based on \eqref{eq_h_ij_bound} and the fact $ \Delta L_{ij}^m \le l_{ij}$, we have, $\forall \kappa \in \mathbb{R}^+$,
	\begin{align}
		&\Scale[1]{\Delta U_{ij}^{m}(t) + \eta_{ij}(t)\Delta L_{ij}^{m}} \notag\\
		&\Scale[1]{\!\!\le\!\begin{cases}
			-h_i + \left[\frac{h_i }{\delta_L}+\kappa\right]l_{ij}, & \text{if $\Delta L_{ij}^m>0$},\\
			h_i, & \text{if $\Delta L_{ij}^m\le 0$.}
		\end{cases}}\label{eq_bound_condition4}
	\end{align}
	With the fact $l_{ij}\ge \delta_L$, we further plug \eqref{eq_bound_condition4} into \eqref{eq_sdado_routing_obj} and upper bound $Z_{\eta_{ij}(t)}^*$:
	\begin{equation}
		\Scale[1]{Z_{\eta_{ij}(t)}^* \le \left[\frac{l_{ij}}{\delta_L} - 1\right]h_i\bar C_{ji} + \kappa l_{ij}\bar C_{ji},\ \ \forall \kappa \in \mathbb{R}^+}.
	\end{equation}
	Since $Z_{\eta_{ij}(t)}({\bf b}_{ij}(t))=0$ due to $b_{ij}^{m'}(t)=0, \forall m'$, it follows that
	\begin{align}
		\Scale[1]{Z_{\eta_{ij}(t)}({\bf b}_{ij}(t)) \ge Z_{\eta_{ij}(t)}^* - \left[\frac{l_{ij}}{\delta_L} - 1\right]}&\Scale[1]{h_i\bar C_{ji}-\kappa l_{ij}\bar C_{ji},}\notag\\
		&\ \ \ \Scale[1]{\forall \kappa \in \mathbb{R}^+.}\label{eq_Z_vs_Z*2}
	\end{align}
\end{itemize}
%In summary of \eqref{eq_Z_vs_Z*1} and \eqref{eq_Z_vs_Z*2}, letting $\kappa = 0$, we have $h_{ij}(t) \le \frac{h_i}{\delta_L}$, and
%%we conclude that, with Assumption \ref{assume_L_quantitization} and \ref{assume_integer_multiple}, SDADO-interest-forwarding solution ${\bf b}_{ij}(t)$ provided by Algorithm \ref{alg_fowarding}-\ref{alg_FwdRateAsgmt} guarantees 
%$Z_{\eta_{ij}(t)}({\bf b}_{ij}(t)) \ge Z_{\eta_{ij}(t)}^* - \left[\nicefrac{l_{ij}}{\delta_L} - 1\right]h_i\bar C_{ji}$.

By setting $\kappa = 0$, it follows from \eqref{eq_h_ij_bound} that $h_{ij}(t) \le \frac{h_i}{\delta_L}$, and we have $Z_{\eta_{ij}(t)}({\bf b}_{ij}(t)) \ge Z_{\eta_{ij}(t)}^* - \left[\nicefrac{l_{ij}}{\delta_L} - 1\right]h_i\bar C_{ji}$ in summary of \eqref{eq_Z_vs_Z*1} and \eqref{eq_Z_vs_Z*2}.

%If Condition \ref{cond_i}) is satisfied when running Algorithm \ref{alg_fowarding}-\ref{alg_FwdRateAsgmt}, and commodity $m$ is assigned to be forwarded over link $(i,j)$ with full capacity, i.e., $b_{ij}^m(t)=\nicefrac{\bar C_{ji}}{z^m}$. Then based on \eqref{eq_h_1}, we have $\eta_{ij}(t)\in [ B_{ij,m}^{\text{lb}}(t),  B_{ij,m}^{\text{ub}}(t)] \subseteq [\hat B_{ij,m}^{\text{lb}}(t), \hat B_{ij,m}^{\text{ub}}(t)]$, and it follows from \eqref{eq_W_x} that
%\begin{equation}
%	W_{ij}(\eta_{ij}(t)) = \Delta U_{ij}^{m}(t) + \eta_{ij}(t)\Delta L_{ij}^{m}. 
%\end{equation} 
%Going back to problem \eqref{eq_sdado-routing_opt}, we have
%
%where the above two inequalities become equality by assigning  according to Algorithm \ref{alg_fowarding}-\ref{alg_FwdRateAsgmt}. If commodity $m$ satisfies Condition \ref{cond_ii}) 

\section{Proof of Theorem \ref{thm_stability}}
\label{appendix_thm_proof}
Let ${\bf U}(t)=\{U_i^{\phi,k,c}(t)\}$ represent the vector of virtual queue backlog values of all the commodities at all the network nodes. The network \emph{Lyapunov Drift} (LD) is defined as
\begin{equation}
	\Scale[1]{\Delta \left({\bf U}(t)\right) \triangleq \frac{1}{2} \mathbb{E} \left\{\left\|{\bf U}(t+1)\right\|^2 - \left\|{\bf U}(t)\right\|^2\right\}},
\end{equation}
where $\|*\|$ indicates Euclidean norm, and the expectation is taken over the ensemble of all the realizations of service request generations at consumers.

After multiplying $z^{(\phi,k)}$ and then squaring both sides of \eqref{eq_queue_dynamic}, by following standard LD manipulations (see reference \cite{bib_Neely_book}), we upper bound LD as
\begin{align}
	\Scale[1]{\Delta \left({\bf U}(t)\right) \le \mathbb{E}\left\{\left. \Gamma(t) + Y(t)\right|{\bf U}(t)\right\} + {\bm \chi}^{\dagger} {\bf U}(t)},\label{eq_LD_1}
\end{align}
%where $ {\bm \chi}$ represents a vector with the same length as ${\bf U}(t)$ having the $\lambda_c^\phi z^{(\phi,0)}$ as the $(c,(\phi,K_\phi,c))$th elements and $0$ elsewhere, $\dagger$ represents transpose operation, 
where
%where ${\bm \lambda}$ represents the vector of expected service request rates 
\begin{align}
	\Scale[0.95]{\Gamma(t) \!\triangleq} & \Scale[0.95]{\frac{1}{2}\sum_{i}\sum_{(\phi,k,c)}\!\left[z^{(\phi,k)}\right]^2\!\!\left\{\!\left[\sum_{j\in {\mathcal O}(i)}\!b_{ij}^{(\phi,k,c)}\!(t) \!+\! p_i^{(\phi,k,c)}\!(t)\right]^{2}\right.}\notag\\
	& \Scale[0.95]{+\!\! \left.\left[\sum_{j\in {\mathcal O}(i)}b_{ji}^{(\phi,k,c)}\!(t) \!+\! p_i^{(\phi,k+1,c)}\!(t) \!+\! a_i^\phi\! (t){\mathbbm 1}\!(k\!=\!K_\phi)\right]^2\!\right\}};\notag\\
	\Scale[0.95]{Y(t)\triangleq} & \Scale[0.95]{\ \sum_{i}\sum_{(\phi,k,c)} U_i^{(\phi,k,c)}(t)z^{(\phi,k)} \left[\sum_{j\in {\mathcal O}(i)}b_{ji}^{(\phi,k,c)}(t)\right.}\notag\\
	& \Scale[0.95]{\left.+ p_i^{(\phi,k+1,c)}(t) - \sum_{j\in {\mathcal O}(i)}b_{ij}^{(\phi,k,c)}(t) - p_i^{(\phi,k,c)}(t)\right].}\notag
\end{align}
By denoting $C_{\text{max}}^{\text{tr}-}\triangleq \max_i\{\sum_{j}\bar C_{ij}\}$, $C_{\text{max}}^{\text{tr}+}\triangleq \max_i\{\sum_{j}\bar C_{ji}\}$, $l_{\text{max}}\triangleq\max_{(i,j)}\{l_{ij}\}$, $h_{\text{max}}=\max_i\{h_i\}$, $z_{\text{max}}\triangleq \max_{m}\{z^m\}$, $r_{\text{min}}\triangleq \min_m\{r^m\}$, $C_{\text{max}}^{\text{pr}}\triangleq \max_i\{C_i^{\text{pr}}\}$, $C_{\text{max}}^{\text{dp}}\triangleq \max_i\{C_i^{\text{dp}}\}$, we upper bound $\Gamma(t)$ as follows:
\begin{align}
	\Scale[0.95]{\Gamma(t)} &\Scale[0.95]{\le\!\frac{\left|\mathcal V\right|}{2} \left\{\!\left[C_{\text{max}}^{\text{tr}+} \!+\! \frac{z_\text{max}C_{\text{max}}^{\text{pr}}}{r_{\text{min}}}\!+\!C_{\text{max}}^{\text{dp}}\right]^2 \!+\! 
		\left[C_{\text{max}}^{\text{tr}-} \!+\! \frac{z_\text{max}C_{\text{max}}^{\text{pr}}}{r_{\text{min}}} \!+\! A_{\text{max}}\right]^2 \!\right\}} \notag\\
	& \Scale[0.95]{\triangleq B_0,}\notag
\end{align}
and decompose $Y(t)$ as follows:
\begin{align}
	Y(t) = Y_{\text{dp}}(t) + Y_{\text{pr}}(t) + Y_{\text{tr}}(t),
\end{align}
where 
%\begin{align}
%	&\Scale[0.95]{Y_{\text{dp}}(t) = -\sum\nolimits_i\sum\nolimits_{(\phi,c): i\in {\mathcal P}^{(\phi,0)}} U_{i}^{(\phi,0,c)}(t)z^{(\phi,0)}p_i^{(\phi,0,c)}(t);}\notag\\
%	&\Scale[0.95]{Y_{\text{pr}}(t) = \sum\nolimits_i\sum\nolimits_{(\phi,k>0,c): i\in {\mathcal P}^{(\phi,k,c)}} \left[U_i^{(\phi,k-1,c)}z^{(\phi,k-1)}-\right.}\notag\\
%	&\Scale[0.95]{\ \ \ \ \ \ \ \ \ \ \ \ \ \ \ \ \ \ \ \ \ \ \ \ \ \ \ \ \ \ \ \ \ \ \ \ \ \  \left.U_i^{(\phi,k,c)}z^{(\phi,k)}\right]p_i^{(\phi,k,c)}(t);}\notag
%\end{align}
\begin{align}
	\Scale[0.95]{Y_{\text{dp}}(t)} &\Scale[0.95]{\triangleq -\sum\nolimits_i\sum\nolimits_{(\phi,c): i\in {\mathcal P}^{(\phi,0)}} U_{i}^{(\phi,0,c)}(t)z^{(\phi,0)}p_i^{(\phi,0,c)}(t);}\notag\\
	\Scale[0.95]{Y_{\text{pr}}(t)} &\Scale[0.95]{\triangleq -\sum\nolimits_i\sum\nolimits_{(\phi,k>0,c): i\in {\mathcal P}^{(\phi,k,c)}}p_i^{(\phi,k,c)}(t)}\notag\\
	&\ \ \  \ \  \ \  \ \ \ \ \ \  \Scale[0.95]{\times\left[U_i^{(\phi,k,c)}(t)z^{(\phi,k)}-U_i^{(\phi,k-1,c)}(t)z^{(\phi,k-1)}\right];}\notag\\
	\Scale[0.95]{Y_{\text{tr}}(t)} &\Scale[0.95]{\triangleq -\sum_i \sum_{j\in \mathcal O(i)} \sum_{(\phi,k,c)} b_{ij}^{(\phi,k,c)}(t) z^{(\phi,k)}}\notag\\
	&\ \ \ \ \ \ \ \ \ \ \ \ \ \ \ \ \ \ \ \ \ \ \ \ \ \ \ \ \ \ \ \Scale[0.95]{\times\left[U_i^{(\phi,k,c)}(t)-U_j^{(\phi,k,c)}(t)\right]}\notag\\
	&\Scale[0.95]{=-\sum_i \sum_{j\in \mathcal O(i)} \sum_{(\phi,k,c)} b_{ij}^{(\phi,k,c)}(t) z^{(\phi,k)}\Delta U_{ij}^{(\phi,k,c)}(t).}\notag
\end{align}

By adding
\begin{equation}
	\Scale[0.95]{Y_L(t) = -\sum_i \sum_{j\in \mathcal O(i)} \eta_{ij}(t)\sum_{(\phi,k,c)} b_{ij}^{(\phi,k,c)}(t) z^{(\phi,k)}\Delta L_{ij}^{(\phi,k,c)}}\notag
\end{equation}
onto both sides of \eqref{eq_LD_1}, we obtain
\begin{align}
	&\Scale[1]{\Delta \left({\bf U}(t)\right) + \mathbb{E}\left\{\left.Y_L(t)\right|{\bf U}(t)\right\}}\notag\\
	&\Scale[1]{\le \mathbb{E}\left\{\left. \Gamma(t) + Y(t)+Y_L(t)\right|{\bf U}(t)\right\} + {\bm \chi}^{\dagger} {\bf U}(t)}\notag\\
	&\Scale[1]{\le B_0 + \mathbb{E}\left\{\left. Y_{\text{dp}}(t)+Y_{\text{pr}}(t)+ \tilde Y_{\text{tr}}(t)\right|{\bf U}(t)\right\} + {\bm \chi}^{\dagger} {\bf U}(t),}\label{eq_LD_2}
\end{align}
where $\eta_{ij}(t)$ is set according to \eqref{eq_h_1} and \eqref{eq_h_2} with $\kappa=0$, and
\begin{align}
	\Scale[1]{\tilde Y_{\text{tr}}(t)} &\Scale[1]{\triangleq Y_{\text{tr}}(t)+Y_L(t)}\notag\\
	&\Scale[1]{=-\sum_i \sum_{j\in \mathcal O(i)} \sum_{(\phi,k,c)} b_{ij}^{(\phi,k,c)}(t) z^{(\phi,k)}}\notag\\
	&\ \ \ \ \ \ \ \ \ \ \ \ \ \ \Scale[1]{\times \left[\Delta U_{ij}^{(\phi,k,c)}(t) + \eta_{ij}(t)\Delta L_{ij}^{(\phi,k,c)}\right].}\label{eq_Y_Z}
\end{align}

%Let $Y_{\text{dp}}^*(t)$, $Y_{\text{pr}}^*(t)$, and $\tilde Y_{\text{tr}}^*(t)$ respectively represent the minimum values achieved by 
On the other hand, given ${\bm \chi}$ being interior to the computing network capacity region $\Lambda$, there exists a positive number $\epsilon$ such that ${\bm \chi} + \epsilon {\bf 1}\in \Lambda$. According to the theorem of computing network capacity region in reference \cite{bib_DCNC}, there exists a stationary randomized policy $\pi^\star$ that determines the $b_{ij}^{(\phi,k,c)\star}(t)$ and $p_i^{(\phi,k,c)\star}(t)$ such that, $\forall i,(\phi,k,c),t$,
\begin{align}
	&\Scale[0.95]{\mathbb{E}\left\{\sum_{j\in {\mathcal O}(i)}b_{ji}^{(\phi,k,c)\star}(t)+ p_i^{(\phi,k+1,c)\star}(t)- \sum_{j\in {\mathcal O}(i)}b_{ij}^{(\phi,k,c)\star}(t)\right.}\notag\\
	&\ \ \ \Scale[0.95]{\left. - p_i^{(\phi,k,c)\star}(t)\right\}z^{(\phi,k)}\le -\lambda_i^\phi z^{(\phi,k)} \mathbbm{1}(k=K_\phi)-\epsilon.}\label{eq_stationary_policy}
\end{align}

Going back to \eqref{eq_LD_2}, with Assumption \ref{assume_integer_multiple}, minimizing $Y_{\text{dp}}(t)$ subject to \eqref{eq_produce_capacity} and minimizing $Y_{\text{pr}}(t)$ subject to \eqref{eq_compute_capacity} both result in max-weight-matching solutions whose implementations are SDADO-data-producing and -processing-commitment shown in Algorithm \ref{alg_data_producing} and \ref{alg_proc_commitment}, respectively. In addition, according to Lemma \ref{lemma_sdado_routing} with Assumption \ref{assume_L_quantitization} and \ref{assume_integer_multiple}, equation \eqref{eq_Y_Z} results in that $\tilde Y_{\text{tr}}(t)$ under SDADO-interest-forwarding in Algorithm \ref{alg_fowarding}-\ref{alg_FwdRateAsgmt} satisfies
\begin{align}
	\!\!\!\Scale[1]{\tilde Y_{\text{tr}}(t)} &\Scale[1]{=-\sum_i \sum_{j\in \mathcal O(i)} Z_{\eta_{ij}(t)}({\bf b}_{ij}(t)).}\notag\\
	&\Scale[1]{\le \sum_i \sum_{j\in \mathcal O(i)} \left[-Z_{\eta_{ij}(t)}^* + \left[\frac{l_{ij}}{\delta_L}-1\right]h_i \bar C_{ji}\right]}\notag\\
	&\Scale[1]{\le -\sum_i \sum_{j\in \mathcal O(i)} Z_{\eta_{ij}(t)}^* \!+\! \left|\mathcal V\right|\left[\frac{l_{\text{max}}}{\delta_L}-1\right]h_{\text{max}}C_{\text{max}}^{\text{tr}+},}\notag\\
	&\Scale[1]{\triangleq -\sum_i \sum_{j\in \mathcal O(i)} Z_{\eta_{ij}(t)}^* + B_{\text{tr}}}
\end{align}
where $-\sum_i \sum_{j\in \mathcal O(i)} Z_{\eta_{ij}(t)}^*$ is equal to the minimized $\tilde Y_{\text{tr}}(t)$ subject to \eqref{eq_comm_capacity}, and $B_{\text{tr}} \triangleq \left|\mathcal V\right|\left[\nicefrac{l_{\text{max}}}{\delta_L}-1\right]h_{\text{max}}C_{\text{max}}^{\text{tr}+}$. 

Then it follows from \eqref{eq_LD_2} and \eqref{eq_stationary_policy} that the SDADO solution in Algorithm \ref{alg_data_producing}-\ref{alg_FwdRateAsgmt} satisfy 
\begin{align}
	&\Scale[0.95]{\Delta \left({\bf U}(t)\right) + \mathbb{E}\left\{\left.Y_L(t)\right|{\bf U}(t)\right\}}\notag\\
	&\Scale[0.95]{\le B_0 + B_{\text{tr}}+\mathbb{E}\left\{\left. Y_{\text{dp}}^\star(t)+Y_{\text{pr}}^\star(t)+ \tilde Y_{\text{tr}}^\star(t)\right|{\bf U}(t)\right\} + {\bm \chi}^{\dagger} {\bf U}(t)}\notag\\
	&\Scale[0.95]{= B_0 + B_{\text{tr}} +\mathbb{E}\left\{\left. Y^\star(t) + Y_L^\star(t)\right|{\bf U}(t)\right\} \!+\! {\bm \chi}^{\dagger} {\bf U}(t)}\notag
\end{align}
\begin{align}
	&\Scale[0.95]{=B_0 + B_{\text{tr}}+\sum_{i}\sum_{(\phi,k,c)} U_i^{(\phi,k,c)}(t)z^{(\phi,k)}\mathbb{E}\left\{p_i^{(\phi,k+1,c)\star}(t)\right.}\notag\\
	&\ \ \ \Scale[0.95]{\left.+ \sum_{j\in {\mathcal O}(i)}b_{ji}^{(\phi,k,c)\star}(t) - \sum_{j\in {\mathcal O}(i)}b_{ij}^{(\phi,k,c)\star}(t) - p_i^{(\phi,k,c)\star}(t)\right\}}\notag\\
	& \ \ \ \Scale[0.95]{+\sum_{i}\sum_{(\phi,c)} U_i^{(\phi,0,c)}(t)z^{(\phi,0)}\lambda_i^\phi + \mathbb{E}\left\{\left.Y_L^\star(t)\right|{\bf U}(t)\right\}}\notag\\
	&\Scale[0.95]{\le B_0 + B_{\text{tr}}- \sum_{i}\sum_{(\phi,k,c)}U_i^{(\phi,k,c)}(t)\left[\lambda_i^\phi z^{(\phi,k)} \mathbbm{1}(k=K_\phi)+\epsilon\right]}\notag\\
	& \ \ \ \Scale[0.95]{+\sum_{i}\sum_{(\phi,c)} U_i^{(\phi,K_\phi,c)}(t)z^{(\phi,K_\phi)}\lambda_i^\phi + \mathbb{E}\left\{\left.Y_L^\star(t)\right|{\bf U}(t)\right\}}\notag\\
	&\Scale[0.95]{\le B_0 + B_{\text{tr}} - \epsilon \sum_{i}\sum_{(\phi,k,c)}U_i^{(\phi,k,c)}(t)  + \mathbb{E}\left\{\left.Y_L^\star(t)\right|{\bf U}(t)\right\}}\label{eq_LD_3}
\end{align}

Moreover, using \eqref{eq_h_ij_bound} with $\kappa = 0$ and the fact $-\Delta L_{ij}^{(\phi,k,c)}=\Delta L_{ji}^{(\phi,k,c)}$, we further upper bound $-Y_L(t)$ and $Y_L^\star(t)$ as follows:
\begin{align}
	\Scale[1]{-Y_L(t)\le \sum_{i}\frac{h_i}{\delta_L}l_{ij}\bar C_{ji}\le  \frac{1}{\delta_L}\left|\mathcal V\right|h_{\text{max}}l_{\text{max}}C_{\text{max}}^{\text{tr}+}\triangleq B_L};\label{eq_Y_L_bound1}\\
	\Scale[1]{Y_L^\star(t)\le \sum_{i}\frac{h_i}{\delta_L}l_{ji}\bar C_{ji}\le  \frac{1}{\delta_L}\left|\mathcal V\right|h_{\text{max}}l_{\text{max}}C_{\text{max}}^{\text{tr}+} = B_L.}\label{eq_Y_L_bound2}
\end{align}
Then, by plugging \eqref{eq_Y_L_bound1} and \eqref{eq_Y_L_bound2} into \eqref{eq_LD_3}, we have
\begin{align}
	\Scale[1]{\Delta \left({\bf U}(t)\right)}  &\Scale[1]{\le B_0 + B_{\text{tr}}+\mathbb{E}\left\{\left.Y_L^\star(t)-Y_L(t)\right|{\bf U}(t)\right\}}\notag\\
	&\ \ \ \Scale[1]{- \epsilon \sum_{i}\sum_{(\phi,k,c)}U_i^{(\phi,k,c)}(t)}\notag\\
	&\Scale[1]{\le B_0 + B_{\text{tr}} + 2 B_L - \epsilon \sum_{i}\sum_{(\phi,k,c)}U_i^{(\phi,k,c)}(t)}\notag\\
	&\Scale[1]{\triangleq B_0 + B_1 - \epsilon \sum_{i}\sum_{(\phi,k,c)}U_i^{(\phi,k,c)}(t),}\label{eq_LD_4}
\end{align}
where $B_1\triangleq B_{\text{tr}} + 2 B_L$.

We then use the theoretical results in \cite{bib_Neely_prob_1} for the proof of network stability with probability $1$. Note that the following bounding condition is satisfied: for $\forall i, (\phi,k,c),t$,
\begin{align}
	&\Scale[1]{\mathbb{E}\left\{\left.\left[U_i^{(\phi,k,c)}(t+1)-U_i^{(\phi,k,c)}(t)\right]^4\right|{\bf U}(t)\right\}}\notag\\
	&\Scale[1]{\le \left[C_{\text{max}}^{\text{tr}-} + \frac{z_\text{max}C_{\text{max}}^{\text{pr}}}{r_{\text{min}}} \!+\! A_{\text{max}}\right]^4 <+\infty}\label{eq_4th_moment_bound}
\end{align}  
Then with \eqref{eq_4th_moment_bound} satisfied, equation \eqref{eq_LD_4} leads to the following stability conclusion based on the derivations in \cite{bib_Neely_prob_1}:
\begin{equation}
	{\limsup\limits_{t\rightarrow \infty} \frac{1}{t}\sum\nolimits_{\tau, m, i} U_i^m(\tau) \le \frac{B_0 + B_1}{\epsilon},}\ \ \text{with prob. 1.}\notag
\end{equation}

\bibliographystyle{IEEEtran}
\bibliography{Reference_long}

%\section*{Acknowledgements}

%\begin{thebibliography}{12}
%\item \red{0.75 column}
%\end{thebibliography}

\end{document}